\g@addto@macro{\endabstract}{\@setabstract}
\setlist[enumerate]{itemsep=5pt,topsep=3pt}
\setlist[itemize]{itemsep=2pt,topsep=3pt}
\setlist[enumerate,1]{label=\arabic*.}
\renewcommand{\leq}{\leqslant}
\renewcommand{\geq}{\geqslant}
\renewcommand{\phi}{\varphi}
\renewcommand{\epsilon}{\varepsilon}
\DeclareMathOperator*{\argmin}{arg\,min}
\DeclareMathOperator{\interior}{int}
\newcommand{\setntn}[2]{ \{ #1 : #2 \} }
\newcommand{\iI}{\mathscr I}
\newcommand{\mM}{\mathscr M}
\newcommand{\fF}{\mathscr F}
\newcommand{\RR}{\mathbbm R}
\newcommand{\NN}{\mathbbm N}
\newcommand{\ZZ}{\mathbbm Z}
\theoremstyle{plain}
\newtheorem{theorem}{Theorem}[section]
\newtheorem{lemma}[theorem]{Lemma}
\newtheorem{proposition}[theorem]{Proposition}
\theoremstyle{definition}
\newtheorem{definition}{Definition}[section]
\newtheorem{assumption}{Assumption}[section]
\begin{document}

\title{}

\date{\today}

\begin{center}
    \Large Coase Meets Bellman: Dynamic Programming for \\ Production
    Networks\footnote{We gratefully acknowledge valuable comments from our
        editors and referees, as well as Pol
      Antr{\`a}s, Yongsung Chang, Davin Chor, Ryo Jinnai,
        Young Sik Kim, Yuta Takahashi and Makoto Saito.  We also acknowledge financial
      support from Korea University (K1922081), BrainKorea21 Plus (K1327408 and T192201), JSPS Grant-in-Aid for
      Scientific Research (20H05633 and 16H03598), Singapore Ministry of
      Education Academic Research Fund R-122-000-140-112 and Australian
      Research Council Award DP120100321. Email contacts:
      \texttt{tomookikuchi@waseda.jp},
      \texttt{nishimura@rieb.kobe-u.ac.jp},
      \texttt{john.stachurski@anu.edu.au}, \texttt{zhangjunnan1224@gmail.com}
    }

  \large
    \bigskip
  Tomoo Kikuchi\textsuperscript{a}, Kazuo Nishimura\textsuperscript{b}, 
  \\
  John Stachurski\textsuperscript{c} and Junnan Zhang\textsuperscript{d}
  {\let\thefootnote\relax\footnotetext{
  \textsuperscript{a} Graduate School of Asia-Pacific Studies, Waseda University;
  \textsuperscript{b} Research Institute of Economics and
      Business Administration, Kobe University, and RIETI; 
  \textsuperscript{c} Research School of Economics, Australian
      National University;
  \textsuperscript{d} Center for Macroeconomic Research at School of
  Economics, and Wang Yanan Institute for Studies in Economics, Xiamen
  University}
  }
  \par \bigskip

  \normalsize

  \today
\end{center}

\begin{abstract} 
    We show that competitive equilibria in
    a range of models related to production networks
     can be recovered as solutions to dynamic programs.
    Although these programs fail to be contractive, we prove that they are
    tractable.  As an illustration, we treat Coase's theory of the
    firm, equilibria in production chains with transaction
    costs, and equilibria in production
    networks with multiple partners.  We then show how the same techniques 
    extend to other equilibrium and decision problems, such as 
    the distribution of management layers within firms and the spatial
    distribution of cities.

    \vspace{1em}

    \noindent
    {\bf Keywords:} Negative discounting; dynamic programming; production chains
    \\
    \noindent
    {\bf JEL\ Classification:} C61, D21, D90
\end{abstract}


\section{Introduction}

Production networks have grown rapidly in size and complexity, in line with
advances in communications, supply chain management and transportation
technology (see, e.g.,~\cite{coe2015global}). These large and complex
networks are sensitive to uncertainty, trade disputes, transaction
costs and other frictions.  Firms routinely shift production and task
allocation across networks, in order to mitigate risk or exploit new
opportunities (see, e.g.,~\cite{bmk}). There is an ongoing need to predict how
equilibria in production networks adapt and respond to shocks, in order to
understand their impact on domestic employment, industry concentration,
productivity and tax revenue.

Dynamic programming provides one methodology for analyzing such equilibria.
While dynamic programming is typically used to study
\emph{dynamic} models (see, e.g., \cite{stokeyec}), it
can also be applied to static models by reinterpreting the time
parameter as an index over firms or other decision making entities, as seen
in, for example, \cite{garicano2006organization}, \cite{hsu2014optimal},
\cite{tyazhelnikov2019production}, and \cite{antras2020geography}.  Our paper
builds on this literature by providing a systematic way to apply the theory of
dynamic programming to both production chains and production networks, as well
as to a range of other static allocation problems involving firm management
and economic geography.

This research agenda faces a technical hurdle: the dynamic programs
most naturally mapped to the competitive allocation problems we wish to
consider usually fail to be contractive.  Contractivity fails because
frictions such as the transaction costs or failure probabilities in the
production chain models translate into \emph{negative} discount rates in the
corresponding dynamic program.  In this paper, we circumvent the need for
contractivity by drawing on dynamic programming methods originally developed to 
solve recursive preference problems.\footnote{See, for example, 
    \cite{epstein1989substitution}, \cite{bloise2018convex} or
    \cite{marinacci2019unique}.  In this sense, our work can be viewed as
    building connections between (a) the existing literature on dynamic
    programming for obtaining static competitive equilibria and (b) the modern
theory of dynamic programming with recursive preferences.}

The contributions of this paper fall into two parts.  The
first is providing a theory of dynamic programming in a
loss-minimization setting where discount rates are negative. The
second is applying this theory to a series of competitive equilibrium problems
involving production chains, production networks and other related models.
Through the application of this theory, we show how the dynamic programming
tools can be used to obtain not only existence and uniqueness of equilibria,
but also computational algorithms, results on comparative statics and insights into the
underlying mechanisms.  

Regarding application, we build on an analytical framework for analyzing
allocation of tasks across firms first developed by \cite{coase1937nature}.
Subsequently, \cite{kikuchi2018span}, \cite{fally2018coasian} and
\cite{yu2019equilibrium} developed Coasian models in which firms trade off
coordination costs within the firm against transaction costs outside the firm.
We show that competitive equilibria in these models can be recovered as solutions
to dynamic programs and use the associated envelope condition to provide
insight on some of the foundational conjectures of \cite{coase1937nature}.

In the remainder of the paper, we then apply similar methods to study a
range of additional applications, including settings where Coasian
transaction costs are replaced by failures in production or costly
transportation, as found, for example, in \cite{levine2012production} and
\cite{costinot2013elementary}; models of knowledge organization and optimal
management structures originally due to \cite{garicano2000hierarchies}; the
analysis of central place theory in \cite{hsu2014optimal}; and the
configuration of general (nonsequential) production networks in the spirit
of \cite{baldwin2013spiders}, \cite{kikuchi2018span},
\cite{yu2019equilibrium} and \cite{tyazhelnikov2019production}.

The applications discussed above differ in many ways. There are different
trade-offs that characterize each model, each of which leads to a particular
endogenous structure.  The negative discount dynamic programming theory developed here provides
a unifying methodology and brings tools to bear on understanding the structure
of the networks where firms, cities and managers coordinate production.

Regarding our technical contribution, the closest existing work in the
economic literature is \cite{bloise2018convex}, who treat noncontractive
dynamic programming problems that arise from recursive utility.  In addition
to results on existence and uniqueness of fixed points of the Bellman
operator, which parallel analogous results in \cite{bloise2018convex}, we
apply a fixed point result of \cite{du1989fixed} to provide new results on
monotonicity, convexity and differentiability of solutions, as well as a full
set of optimality results linking Bellman's equation to existence and
characterization of optimal solutions.\footnote{This optimality theory is
    related to other studies of dynamic programming where the Bellman operator
    fails to be a contraction, such as \cite{martins2010existence} and
    \cite{rincon2003existence}.  Our methods differ because even the
    relatively weak local contraction conditions imposed in that line of
    research fail in our settings.  The fixed point results in this paper are
    related to those found in \cite{kamihigashi2015application}, but here we
    also prove uniqueness of the fixed point, as well as connections to
optimality and shape and differentiability properties.}

The remainder of this paper is structured as follows. In
Section~\ref{s:model}, we study a dynamic optimization problem under
negative discounting and discuss its solution. In
Section~\ref{s:app:production}, we connect this discussion to Coase's theory
of the firm and elaborate on the relationship between our model and other
related models. In Section~\ref{s:app:knowledge} we show that our model can
also be used to understand organization of knowledge within a firm. In
Section~\ref{s:app:networks} we extend our model to expand the scope of
applications to more complex networks. Section~\ref{s:c} concludes. Most
proofs are deferred to the appendix.

\section{Negative Discount Dynamic Programming}\label{s:model}


In this section, we study a dynamic optimization problem in which an agent
minimizes a flow of losses under negative discounting.
  While our main aim is to develop 
techniques for calculating equilibria in production networks, the
topic of negative discount loss minimization does have some independent value.\footnote{
For example, \cite{thaler1981some}, 
\cite{loewenstein1991negative} and \cite{loewenstein1991workers} document
separate instances of such phenomena.  
    \cite{loewenstein1991workers} found that the majority of surveyed workers
    reported a preference for increasing wage profiles over decreasing ones,
    even when it was pointed out that the latter could be used to construct a
    dominating consumption sequence.  \cite{loewenstein1991negative} obtained
    similar results, stating that ``sequences of outcomes that
decline in value are greatly disliked, indicating a negative rate of time
preference'' \cite[p.~351]{loewenstein1991negative}.}

Consider an agent who takes action $a_t$ in
period $t$ with loss $\ell(a_t)$. We interpret $a_t$ as effort
and $\ell(a_t)$ as disutility. Her optimization problem is, for
some $\hat x > 0$,
\begin{equation}\label{eq:dpm}
    \min_{\{a_t\}} \;  \sum_{t=0}^{\infty} \beta^t \ell(a_t) 
    \;\; \text{ s.t. } a_t \geq 0 \text{ for all } t \geq 0
    \text{ and } \sum_{t=0}^{\infty} a_t = \hat x.
\end{equation}
Throughout this section, we suppose that 
\begin{equation}\label{eq:ndas}
    \beta > 1, \;
    \ell(0) = 0, \;
    \ell' > 0 \; 
    \text{ and }
    \ell'' > 0.   
\end{equation}
The convexity in $\ell$
encourages the agent to defer some effort. Negative discounting ($\beta > 1$)
has the opposite effect. We call problem~\eqref{eq:dpm} under the
assumptions in~\eqref{eq:ndas} a \emph{negative discount dynamic
    program}.\footnote{The assumption $\ell(0) = 0$ cannot be weakened, since
    $\ell(0) > 0$ implies that the objective function is infinite.
    Conversely, with the assumption $\ell(0)=0$, minimal loss is always
    finite.  Indeed, by choosing the feasible action path $a_0 = \hat x$ and
    $a_t = 0$ for all $t \geq 1$, we get $\sum_{t=0}^{\infty} \beta^t
    \ell(a_t) \leq \ell(\hat x)$.  Also, given our other assumptions, there is
    no need to consider the case $\beta \leq 1$ because no solution exists.
    Because we are minimizing disutility, when $\beta < 1$ any proposed solution
    $\{a_t\}$ can be strictly improved by shifting it one step into the future
    (set $a_0' = 0$ and $a_{t+1}' = a_t$ for all $t \geq 0$).  Furthermore, if
    $\beta = 1$, and a solution $\{a_t\}$ exists, then the increments
    $\{a_t\}$ must converge to zero, and hence there exists a pair $a_T$ and
    $a_{T+1}$ with $a_T > a_{T+1}$.  Since $\ell$ is strictly convex, the
objective $\sum_t \ell(a_t)$ can be reduced by redistributing a small amount
$\epsilon$ from $a_T$ to $a_{T+1}$.  This contradicts optimality.}

\subsection{A Recursive View}

We can express the problem recursively by introducing a state process
$\{x_t\}$ that starts at $\hat x$ and tracks the amount of tasks remaining.
Set $x_{t+1} = x_t - a_t$ and $x_0 = \hat x$. The Bellman equation for this
problem is
\begin{equation}
    \label{eq:epe}
    w(x) = \inf_{0 \leq a \leq x} \, \{ \ell(a) + \beta w(x-a) \}.
\end{equation}
The Bellman operator is
\begin{equation}
    \label{eq:bellop}
    (Tw)(x) = \inf_{0 \leq a \leq x} \, \{ \ell(a) + \beta w(x-a) \}.
\end{equation}
The Bellman operator is not a supremum norm
contraction because $\beta > 1$.\footnote{For example, let $w
  \equiv 1$ and $g \equiv 0$. Then $Tw \equiv \beta > 1$ while $Tg \equiv
  0$. One consequence is that, if we take an arbitrary continuous bounded
  function and iterate with $T$, the sequence typically diverges. For
  example, if $w \equiv 1$, then, $T^n w \equiv \beta^n$, which diverges to
  $+ \infty$.} Nevertheless, we can show that $T$ is well
behaved, with a unique fixed point, after we restrict its domain to a suitable
candidate class $\iI$. To this end, we set
\begin{equation*}
    X := [0, \hat{x}],
    \quad \phi(x) := \ell'(0)x 
    \quad \text{ and }  \quad
    \psi(x) := \ell(x).    
\end{equation*}
Let $\iI$ be all continuous $w \colon X \to \RR$ with $\phi \leq w \leq \psi$.
These upper and lower bounds have natural 
interpretations.  Since completing all 
remaining tasks at once is in the choice set, its value $\ell(x)$ is an upper
bound of the minimized value.  Regarding the lower bound 
$\ell'(0)x $, this is the value that could be obtained if $\beta=1$ (no
discounting) and the agent, having no time constraint, subdivided without limit.

\begin{proposition}\label{p:ndp}
    The Bellman operator has a unique fixed point $w^*$
    in $\iI$ and $T^nw \to w^*$ as $n \to \infty$ for all $w \in \iI$.
    Moreover, 
    \begin{enumerate}
        \item $w^*$ is strictly increasing, strictly convex, and
            continuously differentiable, and
        \item The policy $\pi^*(x) := \argmin_{0 \leq a \leq x}\{\ell(a)
            + \beta w^*(x-a)\}$ is single-valued and satisfies
            \begin{equation}\label{eq:EN_gen}
                (w^*)'(x) = \ell'(\pi^*(x))
                \qquad (0 < x < \hat x).
            \end{equation}
    \end{enumerate}
\end{proposition}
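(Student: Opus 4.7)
My plan is to verify that $T$ maps $\iI$ into itself, apply the fixed point theorem of Du (1989) to obtain a unique fixed point $w^*$ in $\iI$, and then derive the shape, smoothness, and envelope properties from the Bellman equation. For invariance, the upper bound $Tw \leq \psi$ follows by substituting $a = x$ in the infimum: $\ell(x) + \beta w(0) = \ell(x) = \psi(x)$, since $w(0) = 0$ for every $w \in \iI$. For $Tw \geq \phi$, the pointwise bound $w \geq \phi$ gives $\ell(a) + \beta w(x - a) \geq \phi(x) + [\ell(a) - \ell'(0) a] + (\beta - 1) \ell'(0)(x - a)$, with both bracketed terms nonnegative (the first by convexity of $\ell$ with $\ell(0) = 0$, the second by $\beta > 1$ and $a \leq x$). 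Continuity of $Tw$ follows from Berge's theorem since the feasibility map $x \mapsto [0, x]$ is continuous and compact-valued. The operator is trivially monotone, and in particular $T\phi \geq \phi$ and $T\psi \leq \psi$.

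The main obstacle is uniqueness, since $T$ is not a contraction on $\iI$. Here I would appeal to the fixed point theorem of Du (1989), which on the order-interval $[\phi, \psi]$ combines monotonicity with the concavity-type structure provided by the convex, zero-at-zero loss $\ell$ to force existence and uniqueness of a fixed point $w^* \in \iI$, with iterates converging to $w^*$ from both endpoints. Convergence from an arbitrary $w \in \iI$ then follows from the sandwich $T^n \phi \leq T^n w \leq T^n \psi$.

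For part (1), joint convexity of $(a, x) \mapsto \ell(a) + \beta w(x - a)$ on the convex feasibility set shows that $T$ preserves convexity, and strict convexity of $\ell$ transfers to strict convexity of $w^*$: given $x_1 \neq x_2$ and $\lambda \in (0, 1)$, the convex combination $a_t := \lambda a_t^1 + (1 - \lambda) a_t^2$ of optimal paths for $x_1$ and $x_2$ is feasible for $\lambda x_1 + (1 - \lambda) x_2$ and satisfies $\sum_t \beta^t \ell(a_t) < \lambda w^*(x_1) + (1 - \lambda) w^*(x_2)$, since the two paths must differ at some index. Strict monotonicity follows from the scaling $w^*(x) \leq (x/y) w^*(y) < w^*(y)$ for $0 < x < y$, which uses $\ell(\lambda a) \leq \lambda \ell(a)$ for $\lambda \in (0, 1)$. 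For continuous differentiability I would invoke Benveniste-Scheinkman: for $x_0 \in (0, \hat x)$, the function $\tilde W(x) := \ell(\pi^*(x_0)) + \beta w^*(x - \pi^*(x_0))$ dominates $w^*$ near $x_0$ and touches at $x_0$, forcing a unique subgradient of $w^*$ there; continuity of the derivative then follows from convexity.

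For part (2), single-valuedness of $\pi^*(x)$ is immediate from strict convexity of $a \mapsto \ell(a) + \beta w^*(x - a)$. For $0 < x < \hat x$, if $\pi^*(x) \in (0, x)$ is interior, the first-order condition $\ell'(\pi^*(x)) = \beta (w^*)'(x - \pi^*(x))$ combined with the envelope identity $(w^*)'(x) = \beta (w^*)'(x - \pi^*(x))$ gives~\eqref{eq:EN_gen}. On the region where $\pi^*(x) = x$ (an initial interval determined by $\ell'(x) \leq \beta \ell'(0)$), the Bellman equation yields $w^* = \ell$, so $(w^*)'(x) = \ell'(x) = \ell'(\pi^*(x))$. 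The case $\pi^*(x) = 0$ for $x > 0$ is excluded because the sandwich $\phi \leq w^* \leq \psi$ forces $(w^*)'(0+) = \ell'(0)$, and strict convexity then gives $(w^*)'(x) > \ell'(0)$, contradicting the first-order condition $\ell'(0) \geq \beta (w^*)'(x)$ that $\pi^*(x) = 0$ would require.
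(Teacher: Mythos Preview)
Your outline has the right architecture, but two of the steps you treat as routine are precisely where the real work lies.

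\textbf{The Du hypothesis you skip.} Du's theorem does not ask merely for $T\phi \geq \phi$; it requires the strict gap $T\phi \geq \phi + \epsilon(\psi - \phi)$ for some $\epsilon > 0$. Your lower-bound computation only delivers $T\phi \geq \phi$, and the extra terms $[\ell(a) - \ell'(0)a] + (\beta - 1)\ell'(0)(x - a)$ vanish at the minimizing $a$ when $x$ is small, so no uniform $\epsilon$ falls out of that inequality. The paper works for exactly this point: it computes $T\phi$ explicitly, identifies the threshold $\eta := \max\{x : \ell'(x) \leq \beta \ell'(0)\}$, shows $T\phi = \ell = \psi$ on $[0,\eta]$, and then bounds $T\phi - \phi$ below by the constant $(\psi - \phi)(\eta) > 0$ on $[\eta, \hat x]$. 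That is how the $\epsilon$ is produced, and it uses $\ell'(0) > 0$ in an essential way. You also need operator concavity of $T$ (i.e.\ $T(\lambda u + (1-\lambda)v) \geq \lambda Tu + (1-\lambda)Tv$), which comes from $w \mapsto \ell(a) + \beta w(x-a)$ being \emph{affine} in $w$, not from convexity of $\ell$; your phrase ``concavity-type structure provided by the convex, zero-at-zero loss $\ell$'' points at the wrong object.

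\textbf{The Benveniste--Scheinkman step is circular as written.} Your dominating function $\tilde W(x) = \ell(\pi^*(x_0)) + \beta w^*(x - \pi^*(x_0))$ is differentiable at $x_0$ only if $w^*$ is differentiable at $x_0 - \pi^*(x_0)$, which is what you are trying to prove. The paper avoids this by reparameterizing the choice as $y = x - a$, so the dominating function becomes $x \mapsto \ell(x - y_0) + \beta w^*(y_0)$ with $y_0$ fixed; this is $C^1$ because $\ell$ is and the second term is a constant. That reparameterization also delivers the envelope relation $(w^*)'(x) = \ell'(x - y_0) = \ell'(\pi^*(x))$ directly, without needing the intermediate identity $(w^*)'(x) = \beta (w^*)'(x - \pi^*(x))$ you invoke. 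A smaller issue: your strict convexity argument via optimal paths presupposes the identification $w^* = W$, which is Proposition~\ref{p:policy} and is proved afterward; the paper instead shows that $T$ preserves strict convexity directly on the Bellman side.
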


Stability of the fixed point 
is derived from the monotonicity and concavity of the Bellman
operator in Appendix~\ref{ss:model-proofs}.
In Proposition~\ref{p:finite} we show that the convergence $T^nw \to w^*$
always converges in finite time.

\subsection{Equivalence}
So far, we have solved the Bellman equation~\eqref{eq:epe} and derived
properties of its solutions. However, it is not clear whether the Bellman
equation can characterize the solution to the dynamic optimization
problem~\eqref{eq:dpm}, since the constraint $\sum_t a_t = \hat{x}$ is not
in the Bellman equation. We turn to this issue now.

Let 
\begin{equation}
    \label{eq:vf}
    W(x) 
    := \min \; \left\{
        \sum_{t=0}^{\infty} \beta^t \ell(a_t)
        \,\; : \;\,
        \{a_t\} \in \RR_+^{\infty} 
        \text{ and }
        \sum_{t=0}^{\infty} a_t = x
        \right\}
\end{equation}
be the \emph{value function} of the optimization problem~\eqref{eq:dpm}.
The next proposition shows that $W=w^*$, the fixed point of $T$,
and that the policy correspondence $\pi^*$
solves~\eqref{eq:dpm}. The proof can be found in
Appendix~\ref{ss:model-proofs}.

\begin{proposition} \label{p:policy}
    The sequence $\{a_t^*\}$
    defined by $x_0 = \hat x$, $x_{t+1} = x_t - \pi^*(x_t)$ and $a_t^* =
    \pi^*(x_t)$ is the unique solution to \eqref{eq:dpm}. Moreover, $W =
    w^*$.
\end{proposition}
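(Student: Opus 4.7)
The plan is to establish $W=w^*$ via two inequalities, using the policy-induced sequence $\{a_t^*\}$ to deliver the $W \leq w^*$ side and a Bellman-inequality iteration for the reverse. The principal technical ingredient I will need is a tempering/transversality estimate stating that, for any feasible $\{a_t\}$ with $\sum_t \beta^t \ell(a_t) < \infty$, one has $\beta^N w^*(x_N) \to 0$, where $x_N := \hat x - \sum_{t<N} a_t$. This estimate is standard in positively-discounted dynamic programming but is precisely the step that needs real work here, since $\beta>1$ amplifies rather than damps tails.

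I would prove the tempering estimate as follows. Finiteness of $\sum_t \beta^t \ell(a_t)$ forces $\beta^t \ell(a_t) \to 0$, say $\beta^t \ell(a_t) \leq \eta_t$ with $\eta_t \to 0$. Convexity of $\ell$ with $\ell(0)=0$ gives $\ell(a) \geq \ell'(0) a$, hence $a_t \leq \eta_t / (\ell'(0) \beta^t)$. Summing the geometric tail yields
\begin{equation*}
   x_N = \sum_{t \geq N} a_t \leq \frac{\sup_{t \geq N}\eta_t}{\ell'(0)(1-\beta^{-1})} \cdot \beta^{-N},
\end{equation*}
so $\beta^N x_N \to 0$. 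Because $\ell'$ is continuous on the compact $X$, there exists $M$ with $\ell(x) \leq M x$ on $X$, and since $w^* \leq \psi = \ell$, we get $\beta^N w^*(x_N) \leq M \beta^N x_N \to 0$.

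With the estimate in hand, the inequality $W(\hat x) \geq w^*(\hat x)$ follows by iterating the Bellman inequality $w^*(x) \leq \ell(a) + \beta w^*(x-a)$ along any feasible $\{a_t\}$ (of finite cost; infinite cost is trivial) to obtain $w^*(\hat x) \leq \sum_{t<N} \beta^t \ell(a_t) + \beta^N w^*(x_N)$, then letting $N\to\infty$ and taking the infimum over feasible sequences. For $W(\hat x) \leq w^*(\hat x)$, I would construct $\{a_t^*\}$ via the policy. First show $x_t \downarrow 0$: the monotone sequence has a limit $x_\infty \geq 0$, forcing $a_t^* \to 0$; continuity of $\pi^*$ (single-valued upper hemi-continuous by the maximum theorem, given strict convexity of $a \mapsto \ell(a) + \beta w^*(x-a)$) then gives $\pi^*(x_\infty)=0$; substituting into the Bellman equation yields $(1-\beta)w^*(x_\infty)=0$, hence $w^*(x_\infty)=0$, which combined with $w^* \geq \phi$ forces $x_\infty=0$. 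Next, the Bellman \emph{equality} for $\pi^*$ gives $w^*(\hat x) = \sum_{t<N} \beta^t \ell(a_t^*) + \beta^N w^*(x_N)$. Monotonicity and boundedness of the partial sums implies $\sum_t \beta^t \ell(a_t^*) < \infty$, so the tempering estimate applies to the policy-induced sequence itself, yielding $\beta^N w^*(x_N) \to 0$ and therefore $\sum_t \beta^t \ell(a_t^*) = w^*(\hat x)$. This shows $\{a_t^*\}$ is feasible and attains $w^*(\hat x)$, so $W(\hat x) \leq w^*(\hat x)$.

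Finally, uniqueness of $\{a_t^*\}$ as an optimizer of \eqref{eq:dpm} follows from strict convexity of $\ell$: two distinct minimizers would be strictly dominated by their pointwise average, contradicting optimality. The main obstacle throughout is the tempering estimate, since the usual transversality condition is not automatic when $\beta>1$; once that step is in place, the remaining arguments mirror the standard dynamic-programming equivalence between Bellman and sequence formulations.
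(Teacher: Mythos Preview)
Your argument is correct and genuinely different from the paper's. The paper does not prove a transversality estimate; instead it embeds the problem in the abstract framework of Appendix~\ref{ss:gdp}, verifies Assumption~\ref{a:contbeta}, and invokes Theorem~\ref{t:opt} (which rests on Bertsekas's monotone-increasing abstract DP lemmas) to obtain $w^* = \bar w$, where $\bar w$ is the infimum over nonstationary policies of the limit in~\eqref{eq:piv}. It then identifies $\bar w$ with $W$ by using Lemma~\ref{l:finn0}, which shows that any optimizer of~\eqref{eq:dpm} terminates after finitely many steps (this is where the paper exploits $\ell'(0)>0$, via the constant $\eta$ in~\eqref{eq:eta}). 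Your route is the classical ``iterate the Bellman inequality plus transversality'' argument, with the novelty that you manufacture the transversality condition $\beta^N w^*(x_N)\to 0$ directly from $\ell'(0)>0$ and the bounds $\phi\leq w^*\leq\psi$; this is exactly the step that fails in general when $\beta>1$, and your geometric tail bound on $x_N$ is the right way to recover it here. The paper's approach buys modularity (the same abstract theorems handle the network models of Section~\ref{s:app:networks}) and delivers the principle of optimality for free; yours is more elementary and self-contained, avoiding both the abstract aggregator setup and the appeal to \cite{bertsekas2013abstract}. One small remark: you establish $W(\hat x)=w^*(\hat x)$, but the proposition asserts equality as functions on $X$; your argument applies verbatim with $\hat x$ replaced by any $x\in X$, so this is only a matter of stating it.
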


The envelope condition \eqref{eq:EN_gen} now evaluates to
\begin{equation}
    \label{eq:EN}\tag{EN}
    W'(x_t) = \ell'(a_t^*)
\end{equation}
for all $t\geq 0$, which links marginal value to marginal disutility at
optimal action. Furthermore, \eqref{eq:EN} implies that the sequence
$\{a_t^*\}$ satisfies\footnote{To see this, note that $a^*_t$ solves
  $\inf_{0 \leq a \leq x_t} \left\{ \ell(a) + \beta w^*(x_t - a) \right\}$.
  Since both $\ell$ and $w^*$ are convex, elementary arguments show that
  either $\ell'(a^*_t) = \beta(w^*)'(x_t - a^*_t)$ or $a^*_t = x_t$. It
  follows from \eqref{eq:EN} that either $\ell'(a^*_t) = \beta
  \ell'(a^*_{t+1})$ or $a^*_{t+1} = 0$, which is equivalent to
  \eqref{eq:EU}.}
\begin{equation}
    \label{eq:EU}\tag{EU}
    \ell'(a_{t+1}^*) 
    = \max 
    \left\{
        \frac{1}{\beta} \ell'(a_t^*), \; \ell'(0)
    \right\}
\end{equation}
for all $t\geq 0$, which is akin to an Euler equation with a possibly
binding constraint. In the applications below we use \eqref{eq:EN} and
\eqref{eq:EU} to aid interpretation and provide economic intuition.

Equation~\eqref{eq:EU} implies that $\{a_t^*\}$ is a decreasing
sequence. This agrees with our intuition, since future losses are given greater
weight than current losses.

\subsection{Additional Results}

Instead of assuming $\ell' > 0$ as in \eqref{eq:ndas}, we can treat the case
$\ell'(0) = 0$, which has hitherto been excluded:

\begin{proposition}\label{p:co22}
    When $\ell'(0) = 0$, a feasible
    sequence $\{a^*_t\}$ solves \eqref{eq:dpm} if and only if \eqref{eq:EU}
    holds.  This sequence is unique, decreasing, and satisfies $a_t^* > 0$ for
    all $t$. 
\end{proposition}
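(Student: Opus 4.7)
The plan is to construct an explicit candidate sequence satisfying \eqref{eq:EU} by direct computation, verify its optimality via a subgradient inequality from convexity of $\ell$, and obtain necessity of \eqref{eq:EU} by a two-period perturbation argument. For the construction step, note that when $\ell'(0)=0$ the maximum in \eqref{eq:EU} is always attained by the first argument, so the equation reduces to $\ell'(a_t^*) = \beta^{-t}\ell'(a_0^*)$. Because $\ell''>0$ and $\ell'(0)=0$, the map $\ell'$ is a strictly increasing bijection from $[0,\infty)$ onto $[0,\infty)$, so I would define $a_t(\alpha) := (\ell')^{-1}(\beta^{-t}\ell'(\alpha))$ and $S(\alpha) := \sum_{t\geq 0} a_t(\alpha)$. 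The function $S$ is continuous and strictly increasing with $S(0)=0$ and $S(\hat x) \geq \hat x$; the intermediate value theorem then produces a unique $a_0^*\in(0,\hat x]$ with $S(a_0^*)=\hat x$, and the sequence $a_t^*:=a_t(a_0^*)$ is the unique feasible sequence satisfying \eqref{eq:EU}. It is strictly positive and strictly decreasing because $\ell'$ is strictly increasing and $\beta>1$.

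For sufficiency, convexity of $\ell$ gives $\ell(a_t) \geq \ell(a_t^*) + \ell'(a_t^*)(a_t - a_t^*)$ for any feasible $\{a_t\}$. Multiplying by $\beta^t$, summing, and exploiting the key identity $\beta^t \ell'(a_t^*) \equiv \ell'(a_0^*)$ enforced by \eqref{eq:EU}, the linear term collapses to $\ell'(a_0^*)(\sum_t a_t - \sum_t a_t^*)=0$, so $\sum_t \beta^t \ell(a_t) \geq \sum_t \beta^t \ell(a_t^*)$. Finiteness of the right-hand side follows from the tangent-line bound $\ell(a_t^*) \leq \ell'(a_t^*) a_t^*$ (using $\ell(0)=0$), which yields $\sum_t \beta^t \ell(a_t^*) \leq \ell'(a_0^*)\hat x < \infty$. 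Hence $\{a_t^*\}$ is optimal.

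For necessity, let $\{\tilde a_t\}$ be any optimizer. I would first rule out zero actions: if $\tilde a_T=0$ while $\tilde a_s>0$ for some $s \neq T$, moving $\epsilon>0$ from period $s$ to period $T$ changes the objective to first order by $-\beta^s \ell'(\tilde a_s)\epsilon + \beta^T \ell'(0)\epsilon = -\beta^s \ell'(\tilde a_s)\epsilon < 0$ (with $O(\epsilon^2)$ remainder from strict convexity), contradicting optimality. Given interiority, the two-period perturbation $(\tilde a_t,\tilde a_{t+1}) \mapsto (\tilde a_t - \delta, \tilde a_{t+1}+\delta)$ is feasible for small $|\delta|$, and its vanishing first derivative at $\delta=0$ yields $\beta^t \ell'(\tilde a_t) = \beta^{t+1}\ell'(\tilde a_{t+1})$, which is precisely \eqref{eq:EU}. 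Uniqueness of the optimizer then follows from the construction step.

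The main obstacle I anticipate is verifying that $S(\alpha)$ is finite on a range reaching $\hat x$: since $\ell'(0)=0$, a priori $(\ell')^{-1}$ could flatten out too slowly near zero. The assumed strict convexity and smoothness of $\ell$ provide $\ell'' \geq c > 0$ on a neighborhood of $0$, whence $(\ell')^{-1}(y) \leq y/c$ for small $y$, and summability of $(\ell')^{-1}(\beta^{-t}\ell'(\alpha))$ follows from the geometric decay of $\beta^{-t}$. Modulo this technical check, both the convexity argument for sufficiency and the perturbation argument for necessity are elementary.
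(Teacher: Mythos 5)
Your proof is correct, and its skeleton matches the paper's: you construct the candidate by inverting $\ell'$ and applying the intermediate value theorem to $S(\alpha)=\sum_t(\ell')^{-1}(\beta^{-t}\ell'(\alpha))$ (the paper's function $g$), and you establish sufficiency from the first-order convexity inequality. Two points differ. First, your sufficiency computation telescopes directly on the actions — multiplying the tangent-line bound by $\beta^t$ and using $\beta^t\ell'(a_t^*)\equiv\ell'(a_0^*)$ so the linear term collapses to $\ell'(a_0^*)\sum_t(a_t-a_t^*)=0$ — whereas the paper runs an Abel-type summation by parts on the state variables $x_t$; yours is the cleaner of the two, and your explicit finiteness check $\sum_t\beta^t\ell(a_t^*)\leq\ell'(a_0^*)\hat x$ is a nice touch the paper omits. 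Second, for necessity the paper argues indirectly (strict convexity of the objective gives a unique minimizer, which must therefore be the constructed sequence), while you give a direct two-step perturbation argument ruling out zero actions and then deriving the Euler equation from a two-period variation; both are valid, and yours has the advantage of not presupposing the existence half of the argument. Two small imprecisions worth fixing: $\ell'$ need not be a bijection onto $[0,\infty)$ (it may be bounded), but since you only invert it on $[0,\ell'(\hat x)]$ this is harmless; and for the summability of $S(\alpha)$ you do not need $\ell''\geq c>0$ on a neighborhood (which would require continuity of $\ell''$) — the pointwise hypothesis $\ell''(0)>0$ already gives $\ell'(y)\geq (\ell''(0)/2)\,y$ near zero, hence $(\ell')^{-1}(z)\leq 2z/\ell''(0)$ and geometric summability. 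Your instinct to flag this step is sound: the paper asserts that $g$ is well defined on $\RR_+$ without justification, and the condition $\ell''(0)>0$ is genuinely needed there.
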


Proposition~\ref{p:co22} shows that the Euler equation~\eqref{eq:EU}
becomes necessary and sufficient for
optimality when $\ell'(0)=0$. In fact, \eqref{eq:EU} can be reduced to
$\beta\ell'(a^*_{t+1}) = \ell'(a^*_t)$ in this case, which helps 
derive analytical solutions for some of the applications.

As the above results suggest, the set of tasks will be completed in finite
time if and only if $\ell'(0) > 0$.  The proof is in the appendix.








\section{Application: Production Chains}\label{s:app:production}

Now we turn to applications of our negative discount dynamic program
motivated by production problems. We begin with linear production chains.

\subsection{A Coasian Production Chain}\label{ss:cpc}

In this section we consider a version of the Coasian models developed by
\cite{kikuchi2018span}, \cite{fally2018coasian} and \cite{yu2019equilibrium}.
We show how competitive equilibrium in these models can be
calculated using the dynamic programming theory from Section~\ref{s:model}.

\subsubsection{Set Up}

Consider a market with many
price-taking firms, each of which is either inactive or part of the
production of a single good.  To produce a unit of this good requires
implementing a unit mass of tasks.
The cost for any one firm of implementing an interval of 
length $v \in [0,1]$ is denoted $c(v)$, where $c$ is
increasing, strictly convex, continuously differentiable, and satisfies
$c(0)=0$.\footnote{Unlike \cite{kikuchi2018span}, we allow $c'(0)=0$.}

Firms face transaction costs, as 
a wedge between cost to the buyer and payment
received by the seller.\footnote{This follows \cite{kikuchi2018span} and also
    studies such as \cite{boehm2020misallocation}, where frictions in contract
    enforcement are treated as a variable wedge
    between effective cost to the buyer and payment to the supplier.}  
Transaction costs fall on buyers, so that, for a
transaction with face value $f$, the seller receives $f$ and the buyer
pays $(1+\tau) f$ with $\tau > 0$.\footnote{For example, $\tau f$ might be the cost of writing a contract
for a transaction with face value $f$.  This cost rises in $f$ because more
expensive transactions merit more careful contracts.}

Firms are indexed by integers $i \geq 0$.  A feasible allocation of tasks across firms is a
nonnegative sequence $\{v_i\}$ with $\sum_{i \geq 0} v_i
= 1$.  We identify firm $0$ with the most downstream
firm, firm $1$ with the second most downstream firm, and so on.  
Let $b_i$ be the downstream boundary of firm $i$, so
that $b_0 = 1$ and $b_{i+1} = b_i - v_i$ for all $i\geq 0$.
Then, profits of the $i$th firm are 
\begin{equation}
    \label{eq:prof}
    \pi_i = p(b_i) - c(v_i) - (1+\tau) p(b_{i+1}) . 
\end{equation}
Here $p \colon [0, 1] \to \RR_+$ is a price function, with
$p(t)$ interpreted as the price of the good at processing stage $t$.  

\begin{definition}
    \label{d:eq}
    Given a price function $p$ and a feasible allocation $\{v_i\}$,
    let $\{\pi_i\}$ be corresponding profits, as defined in
    \eqref{eq:prof}.  The pair $(p, \{v_i\})$ is called an
    \emph{equilibrium} for the production chain if
    \begin{itemize}
        \item[1.] $\; p(0) = 0$,
        \item[2.] $p(s) - c(s - t) -  (1+\tau) p(t) \leq 0$ for any pair $s, t$
            with $0 \leq t \leq s \leq 1$, and
        \item[3.] $\; \pi_i = 0$ for all $i$.
    \end{itemize}
\end{definition}
Conditions 1--3 eliminate profits for active firms and prevent entry by
inactive firms.

\subsubsection{Solution by Dynamic Programming}

\label{sss:sdp}

An equilibrium of the production chain satisfies
$p(b_i) = c(v_i) + (1+\tau)p(b_i - v_i)$, which has the same form as the
Bellman equation~\eqref{eq:epe}. Moreover, iterating on this relation yields
the price of the final good
\begin{equation}
    \label{eq:p0}
    p(1) = \sum_{i \geq 0} (1+\tau)^i c(v_i),
\end{equation}
which is analogous to the total loss in \eqref{eq:dpm}.
These facts lead us to a version of the negative discount
dynamic program introduced in Section~\ref{s:model} where a (fictitious)
agent seeks to minimize $\sum_{i \geq 0} (1 + \tau)^i c(a_i)$ subject to
$\sum_{i \geq 0} a_i = 1$.  By construction, any
feasible action path is also a feasible allocation of tasks in the
production chain.

Since $\ell = c$ and $\beta = 1 + \tau$,
the assumptions in~\eqref{eq:ndas} are satisfied.  Hence 
there exists a unique solution $\{a^*_i\}$ by Proposition~\ref{p:policy}. Let $W$ be the corresponding
value function given by \eqref{eq:vf}. The next proposition shows that the
solution to this dynamic program is precisely the competitive equilibrium of
the Coasian production chain.

\begin{proposition}
    \label{p:oie}
    If $p = W$ and $v_i = a_i^*$ for all $i \geq 0$, then $(p, \{v_i\})$
    is an equilibrium for the production chain.
\end{proposition}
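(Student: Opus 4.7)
The plan is to directly verify the three conditions in Definition~\ref{d:eq} by leveraging the Bellman equation $W(x) = \inf_{0 \leq a \leq x}\{c(a) + (1+\tau)W(x-a)\}$ satisfied by $W = w^*$ (from Propositions~\ref{p:ndp} and~\ref{p:policy}), together with the recursion $x_{t+1} = x_t - \pi^*(x_t)$ defining the optimal trajectory. The correspondence $\beta = 1+\tau$ and $\ell = c$ makes the production chain problem a special case of the dynamic program from Section~\ref{s:model}, so the heavy lifting has already been done; what remains is to translate fixed-point and optimality statements into equilibrium language.

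First I would dispatch condition~1. Since $W$ is the value function in~\eqref{eq:vf} evaluated at $x=0$, the only feasible action path is $a_t \equiv 0$, giving $W(0)=\sum_{t\geq 0}(1+\tau)^t c(0)=0$ because $c(0)=0$. Next, for condition~2, fix $0\leq t\leq s\leq 1$ and set $a = s - t$, which lies in $[0, s]$. The Bellman equation gives
\begin{equation*}
    W(s) \;\leq\; c(a) + (1+\tau) W(s-a) \;=\; c(s-t) + (1+\tau) W(t),
\end{equation*}
which is precisely the no-entry inequality $p(s) - c(s-t) - (1+\tau)p(t) \leq 0$ under $p=W$.

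For condition~3, I would show that the boundaries $b_i$ coincide with the state trajectory $\{x_i\}$ of the dynamic program started at $\hat x = 1$. Indeed, $b_0 = 1 = x_0$, and if $b_i = x_i$ then $b_{i+1} = b_i - v_i = x_i - a_i^* = x_{i+1}$, by construction of $\{a_i^*\}$ in Proposition~\ref{p:policy}. Since $a_i^* = \pi^*(x_i)$ is the minimizer in the Bellman equation at state $x_i$, we obtain
\begin{equation*}
    W(x_i) \;=\; c(a_i^*) + (1+\tau) W(x_i - a_i^*) \;=\; c(v_i) + (1+\tau) W(b_{i+1}),
\end{equation*}
which rearranges to $\pi_i = p(b_i) - c(v_i) - (1+\tau)p(b_{i+1}) = 0$.

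Finally I would check that $\{v_i\} = \{a_i^*\}$ is itself a feasible allocation, i.e.\ $\sum_{i\geq 0} v_i = 1$; this is immediate from the constraint $\sum_t a_t = \hat x = 1$ built into~\eqref{eq:vf}. No step here is a genuine obstacle: the only subtlety is being careful that condition~2 is stated over all feasible $(s,t)$ rather than merely along the optimal path, but the Bellman inequality gives exactly this global statement, so the proof is essentially a translation exercise.
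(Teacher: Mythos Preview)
Your argument is correct and mirrors the paper's proof for the case $c'(0)>0$: verify each clause of Definition~\ref{d:eq} using the Bellman equation satisfied by $W=w^*$ and the optimal state trajectory $\{x_i\}$.

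The one gap is coverage of the case $c'(0)=0$, which the production-chain setup explicitly permits (see the footnote in Section~\ref{ss:cpc}). Propositions~\ref{p:ndp} and~\ref{p:policy} are proved under the standing hypothesis $\ell'>0$ in~\eqref{eq:ndas}, so your appeal to $W=w^*$ and $a_i^*=\pi^*(x_i)$ is not justified when $c'(0)=0$. The paper treats this case separately via Proposition~\ref{p:co22}: since the Euler relation $\beta\ell'(a_{i+1}^*)=\ell'(a_i^*)$ is inherited by every tail $\{a_j^*\}_{j\geq i}$, that tail is itself the unique optimal allocation for the remaining mass $b_i$, and hence $W(b_i)=c(a_i^*)+(1+\tau)W(b_{i+1})$ follows directly from the definition~\eqref{eq:vf} of $W$ rather than from a Bellman fixed-point identity. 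Your verifications of conditions~1 and~2 actually go through unchanged in both regimes (they use only $c(0)=0$ and the infimum structure of $W$), so the missing work is confined to condition~3.
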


For firm
with downstream boundary $b_i$, the envelope condition~\eqref{eq:EN} yields
\begin{equation}
    \label{eq:dofc}
    p'(b_i) = c'(v_i).
\end{equation}
Since $v_i$ is the optimal range of tasks implemented in-house by firm~$i$
in equilibrium, this expresses Coase's key idea: the size of the
firm is determined as the scale that equalizes the marginal costs of in-house
and market-based operations.
The Euler equation~\eqref{eq:EU} also implies
that $\{v_i\}$ is decreasing, so firm size increases with
downstreamness. This generalizes a finding of
\cite{kikuchi2018span}.

\subsubsection{An Example}

\label{sss:cf}

Suppose that the range of
tasks $v$ implemented by a given firm satisfies $v = f(k, n)$, where $k$ is
capital and $n$ is labor.  Given rental rate $r$ and wage rate $w$, the cost
function is $c(v) := \min_{k, n} \{rk + wn \}$ subject to $f(k, n) \geq v$.
Suppose further that, as in \cite{lucas1978size}, the production
function has the form $\phi(g(k, n))$, where $g$ has constant returns to scale
and $\phi$ is increasing and strictly concave (due to
``span-of-control'' costs).  To generate a closed-form solution, we take
    $g(k, n) = A k^\alpha n^{(1-\alpha)}$ and 
    $\phi(x) = x^\eta$,
with $0 < \alpha, \eta < 1$.  The resulting cost function has the form $c(v) =
\kappa v^{1/\eta}$, where $\kappa$ is a positive constant.

By Proposition~\ref{p:oie}, the optimal action path for the fictitious agent
corresponds to the equilibrium allocation of tasks across firms, and the
value function is the equilibrium price function. Since $c'(0) = 0$,
Proposition~\ref{p:co22} applies and~\eqref{eq:EU}
yields $a^*_{i+1} = \theta a^*_i$ for all $i \geq 0$, where $\theta :=
(1+\tau)^{\eta/(\eta - 1)} < 1$. From $\sum_{i = 0}^{\infty} a^*_i = 1$ we
obtain $v_i = a^*_i = \theta^i (1 - \theta)$.  Substituting this path into
\eqref{eq:vf} gives the price function
\begin{equation}\label{eq:wcfe}
    p(x) = W(x) = 
    \kappa 
    \left(1 - \theta\right)^{(1-\eta)/\eta} x^{1/\eta}.
\end{equation}
As anticipated by the theory, $p$ is strictly increasing and strictly
convex.

Intuitively, firm-level
span-of-control costs cannot be eliminated in aggregate due to transaction
costs, which force firms to maintain a certain size. This leads to strict
convexity of prices. If firms have constant returns to management ($\eta=1$),
then the price function in \eqref{eq:wcfe} becomes linear.\footnote{The above
    result on the size of firms is related to 
    \cite{antras2020geography}, who prove it is optimal to locate
    relatively downstream stages of production in relatively central locations
    where trade costs are lower. Their result holds because trade costs have more
    pronounced effects in more downstream stages of production in their model.
    Similarly, in our model, transaction costs have more pronounced effects in
    more downstream states of production, due to~\eqref{eq:EU}.}


\subsection{Specialization and Failure Probabilities}

Production processes typically involve a series of complementary tasks. 
Mistakes in any one task can dramatically reduce the product's value.
Implications of such specialization and failure probabilities were studied in,
among others, the O-ring theory of economic development by
\cite{kremer1993ring} and the production chain models of
\cite{levine2012production} and \cite{costinot2013elementary}.  
These papers show how equilibrium
allocations can serve to mitigate the potentially exponential cost of failures in
long production chains.\footnote{For example, in \cite{levine2012production}, long chains
    involve a high degree of specialization and produce a large quantity of
    output but are also more prone to failure. However, chains in his model
    are long only if the failure rate is low thus mitigating the exponential
    impact that production failure of a single link has on output.  Similarly,
    \cite{costinot2013elementary},  in a global supply chain model where
    production of the final goods is sequential and subject to mistakes, show
that countries with lower probabilities of making mistakes specialize in later
stages of production.} In this section, we show that these ideas 
 are also amenable to analysis
using the negative discount dynamic program from Section~\ref{s:model}.

Consider, as before, a competitive market where producers implement a
mass of tasks contained in $[0, 1]$. We drop the assumption of
positive transaction costs and replace it with 
positive probability of defects.\footnote{Defects can alternatively be understood as
iceberg costs, where some percentage of goods are lost in transporting them
from one producer to the next.} Due to these defects, a producer who buys at
stage $t$ and sells at $s > t$ must buy $1+\tau$ units of the partially
completed good at $t$ to sell one unit of the processed good at $s$. Larger
$\tau$ then corresponds to a production process that is more prone to
failure. Profits for such a firm facing price function $p$ are
\begin{equation*}
	\pi = p(s) - c(s-t) - (1 + \tau) p(t).
\end{equation*}
This parallels the profit function \eqref{eq:prof} from the Coasian
production chain model.  If we
adopt the Cobb--Douglass technology from Section~\ref{sss:cf}, 
then the price of the final good is
\begin{equation}
	\label{eq:final-price}
	p^*(1) = \kappa \left( 1- (1 + \tau)^{\eta/(\eta-1)} \right)^{(1-\eta)/\eta}.
\end{equation} 

A rise in the failure probability
leads to only a moderate increase in the final good price. This is because
producers increase their range of internal production to mitigate any rise in cost
associated with a higher production failure of upstream producers. As a result,
there are fewer producers in production chains and the
compounding effect of higher production failures is limited. 

To clarify this point, let us compare this outcome with 
a hypothetical model where producers do not adjust their
production according to failure probabilities. Suppose in particular that
production chains are simply divided into equal tasks by $N$ 
producers. In this case, the final good price is
\begin{equation}
	\label{eq:final-price-hypothetical}
	{\hat p^*(1)}
	= \kappa \sum_{i=0}^N(1 + \tau)^i
	\left(\frac{1}{N}\right)^{1/\eta}
	= \kappa \frac{(1+\tau)^N - 1}{(1+\tau) - 1}
    \left( \frac{1}{N} \right)^{1/\eta}
	=O((1+\tau)^N).
\end{equation}
Now a small increase in $\tau$ increases the final good price
exponentially. This is intuitive, as an increase in cost compounds over all
producers involved in the production chain. See Figure~\ref{fig:price} for
a comparison of prices with and without producers adjusting for failure
probabilities.\footnote{In this example, we set $\kappa = 1$, $\eta = 0.5$,
	and $N = 50$.}

\begin{figure}
	\centering
	\includegraphics[width=0.8\textwidth]{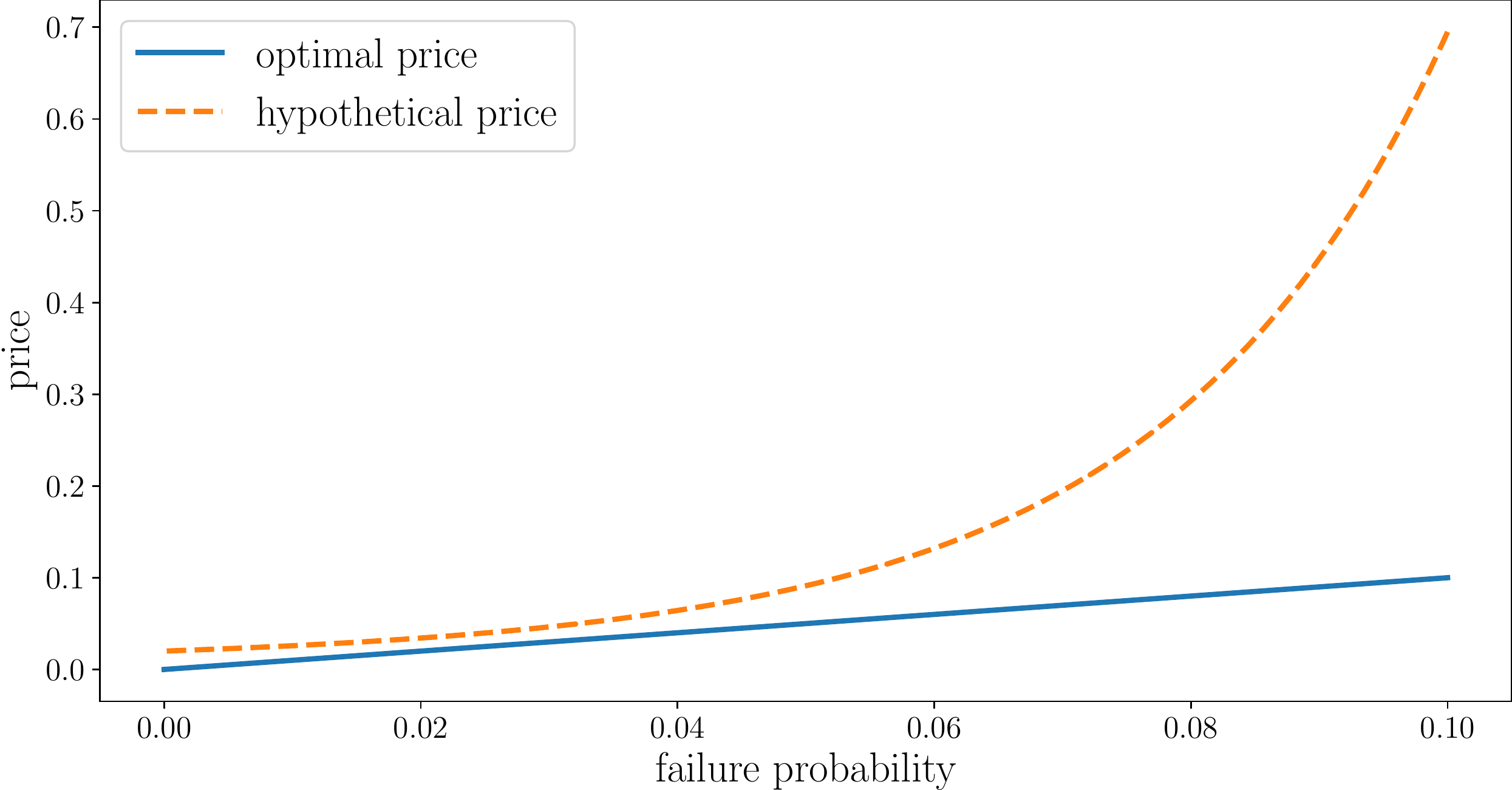}
	\caption{Final good price and failure probabilities.}
	\label{fig:price}
\end{figure}

Thus, returning to the original model, we see that equilibrium prices induce
producers to adjust to changes in failure probabilities, which optimally
mitigates the potentially exponential impact of failures on the cost of the
final good.


\section{Application: Knowledge and Communication}
\label{s:app:knowledge}


Many firms are characterized by a pyramidal structure, in which employees form
management layers with each layer smaller than the previous
one.  These features were modeled by \cite{garicano2000hierarchies}, 
where hierarchical organization of
knowledge involves a trade-off between the cost of acquiring problem solving
knowledge and the cost of communicating with others for help.   In this
section, we solve a version of Garicano's model using the dynamic programming
theory from Section~\ref{s:model}. 

Consider a firm where production requires solving a
set of problems.
Employees at management layer $i$ are
assigned problems $m_i\in[0,1]$. They learn to solve $z_i$ at cost $c(z_i)$ and pass
on the remainder $m_{i+1} = m_i - z_i$ to the next management layer $i+1$. This
incurs additional communication costs that are proportional to the value
of problems assigned to layer $i+1$ with coefficient $\tau$. 

Let $p\colon [0, 1] \to \RR$ be a (fictitious) price
function that assigns value to problems. Profits
of the $i$th management layer are 
\begin{equation*}
    \pi(m_i,z_i)=p(m_i)-(1+\tau)p(m_{i}-z_i)-c(z_i),
\end{equation*}
where $p(m_i)$ is the value of problems assigned to layer $i$, $(1+\tau) p(m_i -
z_i)$ is the cost of communicating and assigning unsolved problems to the
next layer, and $c(z_i)$ is the cost of learning to solve $z_i$.
Setting profits to zero and minimizing with respect to $m_{i+1}$ yields 
\begin{equation*}
    p(m_i)=\min_{m_{i+1}\leq m_i}\{c(m_i-m_{i+1})+(1+\tau)p(m_{i+1})\}.
\end{equation*}
This parallels the Bellman equation \eqref{eq:epe} of the negative discount
dynamic program in Section \ref{s:model}.

Suppose that $n$ employees can learn to solve $z = f(n)$ problems.
In other words, for a given
range of problems $z$, the number of employees required to solve $z$ is $n =f^{-1}(z)$.
Assume that $f$ is strictly increasing, strictly concave, and continuously
differentiable with $f(0) = 0$, and that $c(z) = wn = wf^{-1}(z)$ for some
wage rate $w$. Then the assumptions in \eqref{eq:ndas} are satisfied if we let $\ell = c$
and $\beta = 1+\tau$. The Euler equation~\eqref{eq:EU} implies that the
optimal sequence $\{z_i\}$ is decreasing, as is the number of
employees in each layer as $n_i = c(z_i)/w$. This replicates
Garicano's result that the top management layer has the smallest number of
employees and each layer below is larger than the one above.

The Euler equation \eqref{eq:EU} adds additional insight: each layer of
management acquires knowledge up to the point where the marginal cost of
learning equals the marginal cost of communicating and assigning unsolved
problems to the next layer.  The envelope condition~\eqref{eq:EN} implies
$p'(m_i)=c'(z_i)$, which says that, in equilibrium, the marginal value of
problems assigned to a management layer equals the marginal cost of learning
to solve problems within the layer.\footnote{This result is analogous to
    \eqref{eq:dofc} for the production chain model and reminiscent of Coase's
theory of the firm in the context of knowledge organization within a firm.}

\begin{figure}[tb!]
	\centering
	\begin{subfigure}{0.33\textwidth}
		\centering
		\includegraphics[height=0.9\textwidth]{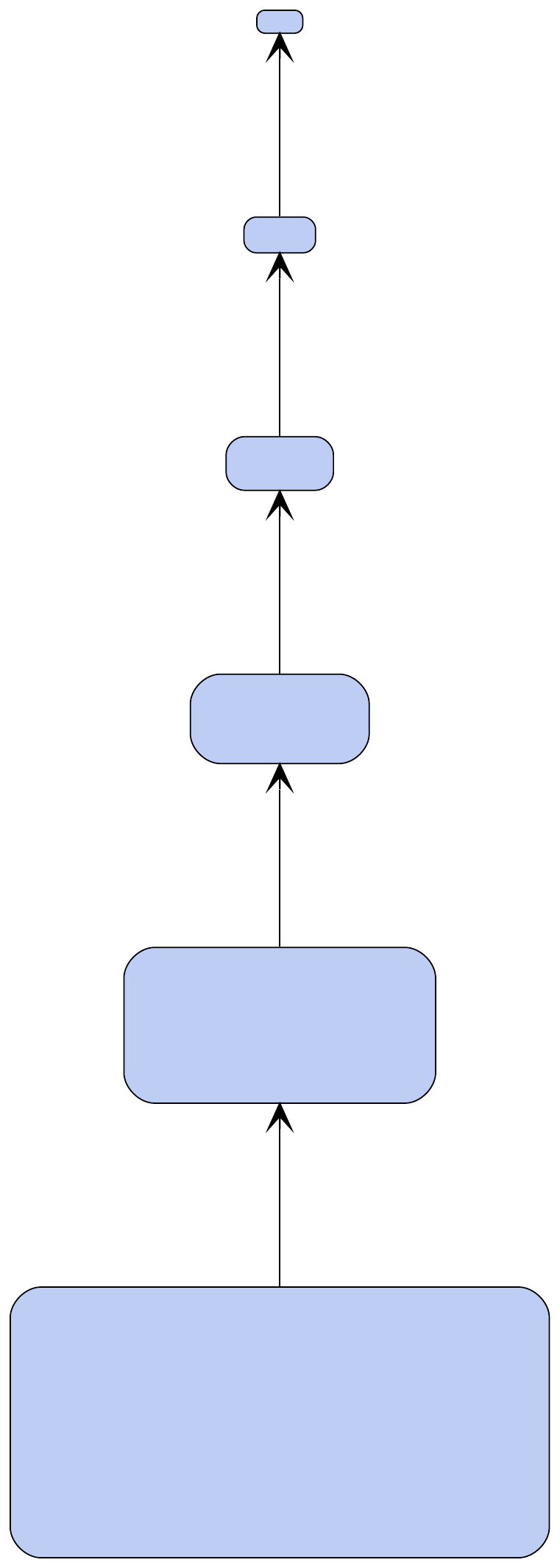}
		\caption{$\tau = 0.2$}
	\end{subfigure}%
	\begin{subfigure}{0.33\textwidth}
		\centering
		\includegraphics[height=0.9\textwidth]{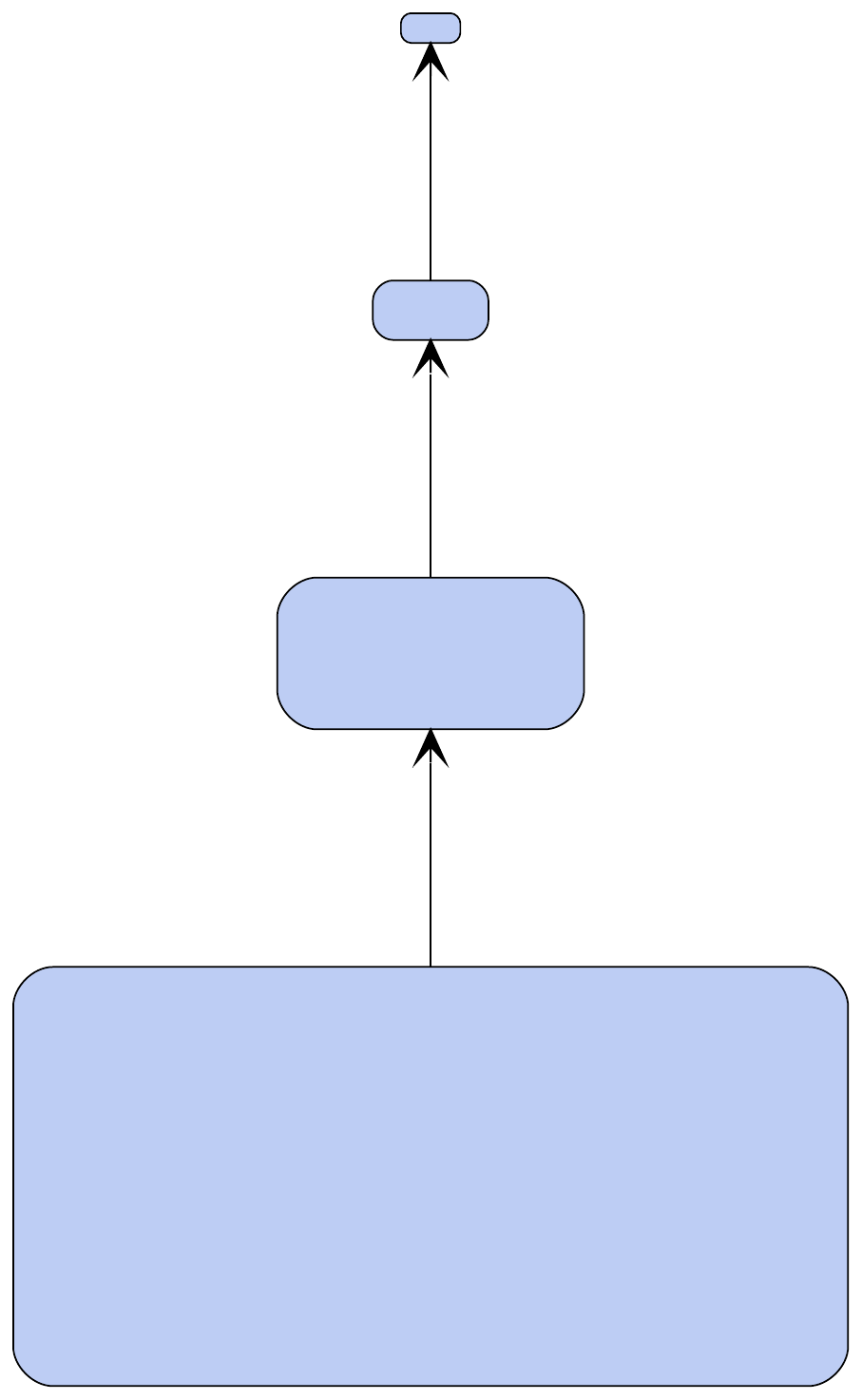}
		\caption{$\tau = 0.4$}
	\end{subfigure}
	\begin{subfigure}{0.33\textwidth}
		\centering
		\includegraphics[height=0.9\textwidth]{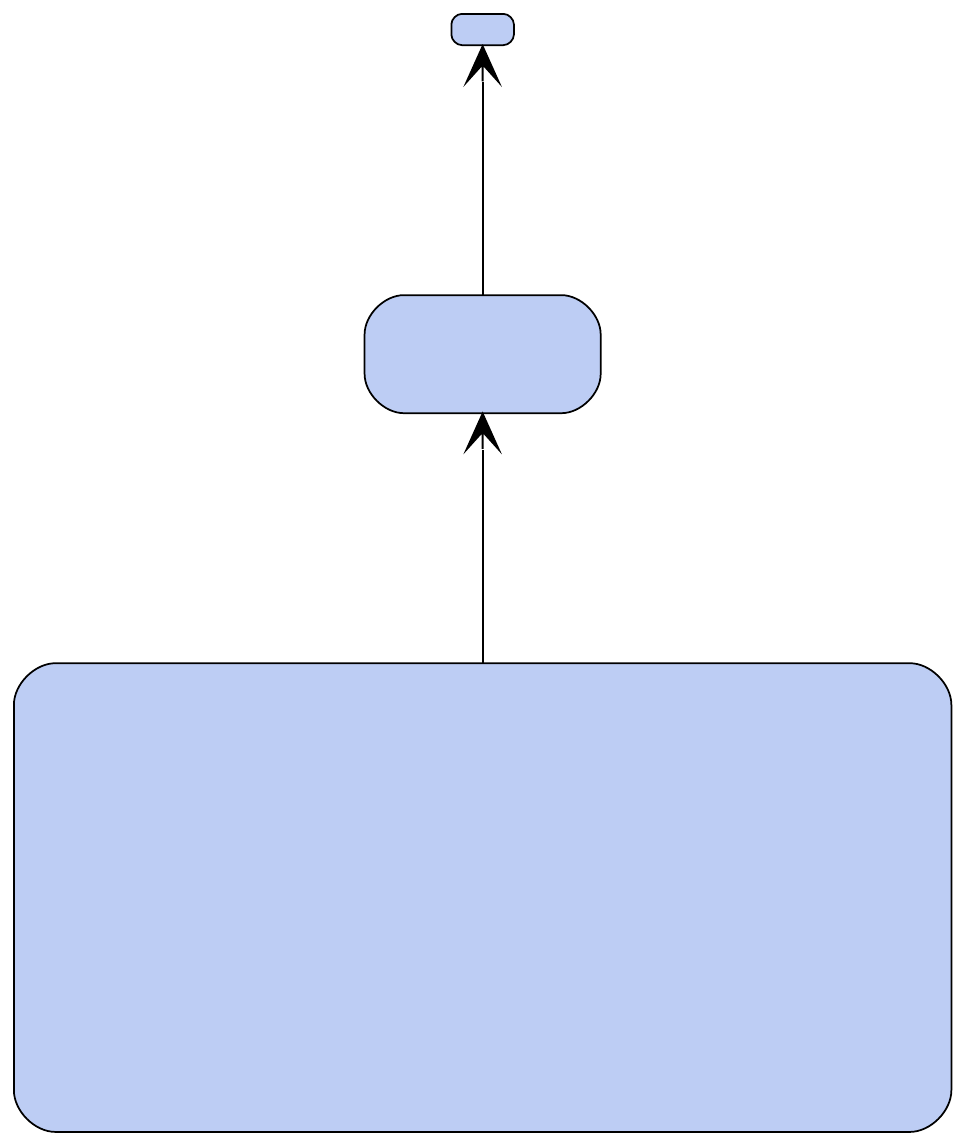}
		\caption{$\tau = 0.6$}
	\end{subfigure}%
	\caption{Optimal organizational structures.}
	\label{fig:garicano}
\end{figure}

Figure~\ref{fig:garicano} plots the optimal organizational structures of
three firms given by the model above.\footnote{We set $c(z) = z^{1.2}$ and
	$m_0 = 1$.} Each node corresponds to one management layer, who asks the
layer above for help, and its size is proportional to the number of
employees in that layer. As shown in the graphs, each firm has a pyramidal
structure and higher communication costs increase the relative knowledge
acquisition of lower layers and reduce the number of layers.

\section{Extension: Nonlinear Networks}
\label{s:app:networks}

In this section we treat more general network models that cannot be directly
handled by the theory in Section~\ref{s:model}.  Unlike the linear chains
discussed above, agents can interact with multiple partners. In
Section~\ref{ss:spatial}, we study a problem from economic geography.  In
Section~\ref{ss:network} we study chains with multiple upstream partners using
a general dynamic programming theory developed in Appendix~\ref{ss:gdp}.

\subsection{Spatial Networks}
\label{ss:spatial}

The distribution of city sizes shows remarkable regularity, as described by the
rank-size rule.\footnote{See \cite{gabaix2004evolution} and \cite{gabaix2009power} for
surveys.} One early attempt to  match the empirical city size
distribution is found in the central place theory of \cite{christaller1933central}.
\cite{hsu2012central} and \cite{hsu2014optimal} formalize Christaller's theory.
  In this section, we develop a model with similar insights 
 by extending our earlier dynamic programming results.

Consider a government that opens competition for many developers to build
cities to host a continuum of dwellers indexed by $[0, 1]$. Each developer can
build a large city that hosts everyone or  
build a smaller city and pay other developers to build 
``satellite cities'' that host the rest of the
population. Further satellites can be built
for existing cities until all dwellers are accommodated. 
This chain of city building
starts with a single developer, who is assigned the whole population, and
ends with a network of cities consisting of multiple layers.

Building satellite cities incurs extra costs that are charged as an ad
valorem tax on the payments to
the developers. We can think of the extra costs as costs of
providing public goods that connect different cities such as roads,
electricity, water, telecommunication, etc.  
Developers are paid according to a price $p \colon [0, 1] \to \RR$, which is a
function of the population assigned. The cost function of building a
city is $c \colon [0, 1] \to \RR$ and the tax rate is
$\tau$. A developer assigned to host $s$ dwellers maximizes profits by
solving
\begin{equation*}
    \max_{0 \leq t \leq s}
    \left\{
        p(s) - c(s-t) - (1+\tau) k p\left( \frac{t}{k} \right)
    \right\},
\end{equation*}
where $p(s)$ is the payment to the developer, $c(s-t)$ is the cost of
building a city of population $s-t$, $k$ is the number of satellite cities,
and $(1+\tau) k p(t/k)$ is the cost of assigning population $t/k$ to $k$
satellites.  In equilibrium, a city network is formed
where every dweller is accommodated and every developer makes zero profits.
The equilibrium price function satisfies
\begin{equation}
    \label{eq:city}
    p(s) = \min_{0 \leq t \leq s}
    \left\{
        c(s-t) + (1+\tau) k p\left( \frac{t}{k} \right)
    \right\}.
\end{equation}
To find the equilibrium price function, we first solve a negative discount
dynamic program and then show that its value
function is the solution to \eqref{eq:city}. 

Consider a dynamic optimization
problem with value function given by
\begin{equation}
    \label{eq:city_W}
    W(x) := \min_{\{v_i\}} \left\{
        \sum_{i=0}^\infty (1+\tau)^i k^i c(v_i):
        \{v_i\}\in \RR_+^\infty \text{ and } \sum_{i=0}^\infty k^i v_i = x
    \right\}.
\end{equation}
The problem in \eqref{eq:city_W} is a modified version of \eqref{eq:dpm}
that also features negative discounting and a convex loss function. In the
context of our city network model, \eqref{eq:city_W} describes a social
planner who minimizes the total cost of hosting the whole population, where
$v_i$ stands for the size of cities on layer $i$. 

In what follows we let $c(s) = s^\gamma$ with $\gamma > 1$. To emulate the
bifurcation process in \cite{hsu2012central} and \cite{hsu2014optimal}, we let
$k = 2$.  A similar argument to the proof of Proposition~\ref{p:co22} gives
the Euler equation
\begin{equation}
    \label{eq:EU_city}
    c'(v_i) = (1 + \tau) c'(v_{i+1}).
\end{equation}
Using this equation, it can be shown with some algebra that
$v_i = \theta^i (1-2\theta)$ for $\theta := (1+\tau)^{1/(1-\gamma)} < 1/2$
and the value function is $W(s) = (1-2\theta)^{\gamma-1} s^\gamma$.
It is
straightforward to verify that $p=W$ satisfies \eqref{eq:city}. Hence, the
value function for the social planner is also the equilibrium price function
under which all developers make zero profits.

The Euler equation~\eqref{eq:EU_city} describes the emergence of
optimal city hierarchy where each developer expands a city to accommodate
more dwellers until the marginal cost of expanding equals the marginal cost
of building and expanding satellite cities.
An envelope condition
similar to \eqref{eq:EN} also
holds: if a developer is assigned $s$ dwellers and delegate $t$ dwellers to
satellite cities, the equilibrium is reached when $p'(s) = c'(s-t)$. This
shows that the marginal value that a city provides must be equal to the
marginal cost of accommodating one more city dweller.

\begin{figure}
	\centering
	\includegraphics[width=0.8\textwidth]{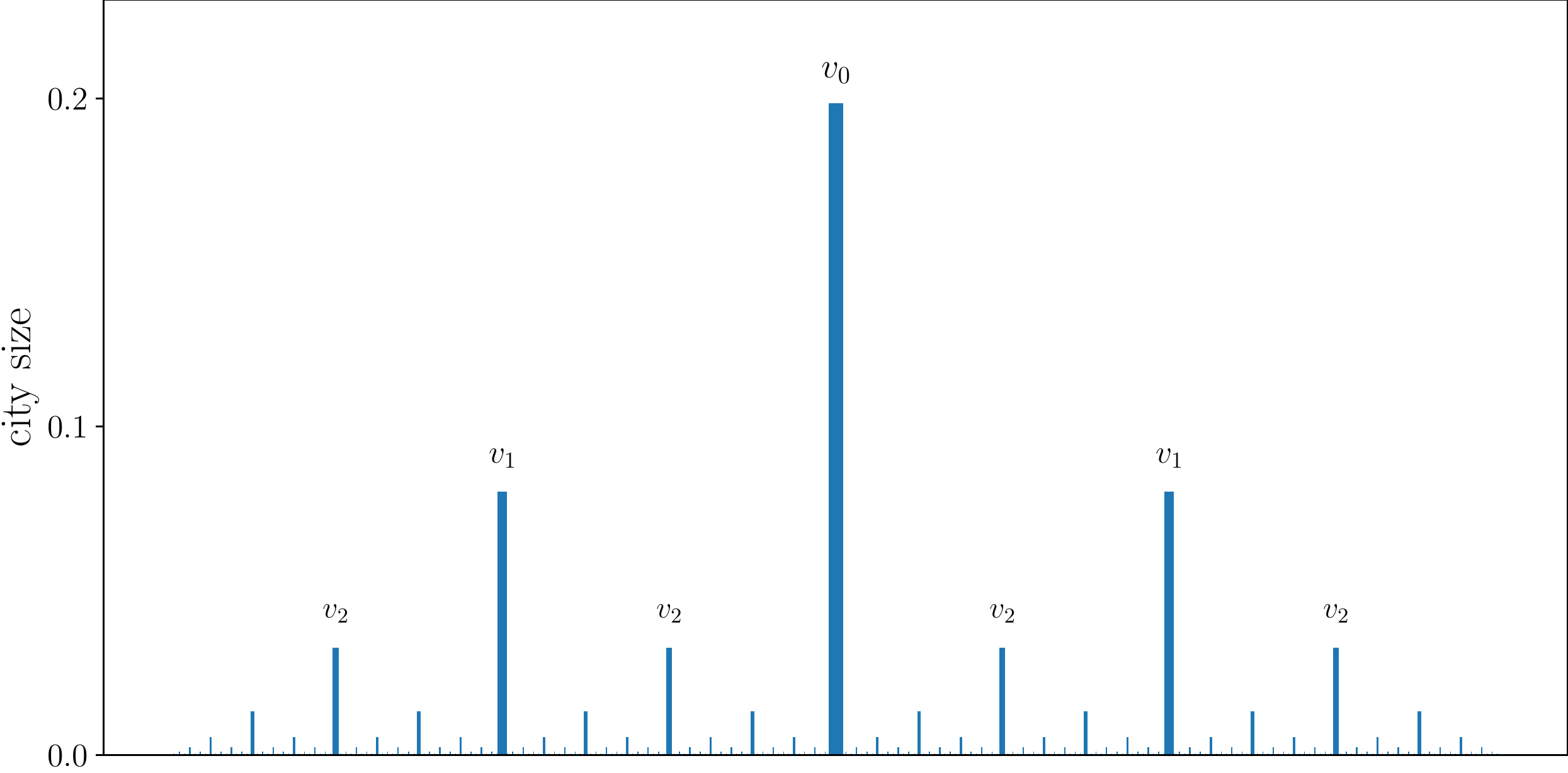}
	\caption{Illustration of optimal city hierarchy.}
	\label{fig:city}
\end{figure}

Figure~\ref{fig:city} illustrates the optimal city hierarchy by placing
cities according to \cite{hsu2012central} and
\cite{hsu2014optimal}.\footnote{We set $\gamma = 1.2$ and $\tau = 0.2$.} It
replicates the relative sizes of cities on different layers as in
\cite{hsu2012central} and \cite{hsu2014optimal}. Moreover, since the number
of cities doubles from one layer to the next, the rank of a city on layer
$i$ is $2^i$. Hence, the city size distribution generated by our model
follows a power law similar to \cite{hsu2012central}. In fact, the rank and
size of a city satisfy
\begin{equation*}
    \ln(Rank) = - \frac{\ln(1/2)}{\ln(\theta)} \ln(Size) + C,
\end{equation*}
where $C$ is a constant determined by $\theta$. When $\theta$ approaches
$1/2$, the slope approaches one, which corresponds to the well-documented
rank-size rule.

\subsection{Snakes and Spiders}\label{ss:network}

Modern production networks are characterized by processes that are both
sequential and non-sequential.
\cite{baldwin2013spiders} refer to the sequential processes  as ``snakes'' and
the non-sequential processes as ``spiders'', and analyze how the location of
different parts of a production chain is determined by unbundling costs of
production across borders. Here we study a model of production
networks featuring both snakes and spiders. 

As in \cite{kikuchi2018span} and \cite{yu2019equilibrium}, we consider a
generalization of the production chain model in Section~\ref{ss:cpc}, where
each firm can also choose the number of suppliers.  
 To account for costs of
extending spiders, we assume that firms bear an additive assembly cost $g$
that is strictly increasing in the number of suppliers, with $g(1) = 0$. Then
for a firm at stage $s$ that subcontracts tasks of range $t$ to $k$ suppliers,
the profits are
\begin{equation*}
p(s) - c(s-t) - g(k) - (1+\tau)kp(t/k),
\end{equation*}
where $p$ is the price function. Having multiple suppliers leads to another
trade-off: firms potentially benefit from subcontracting at a lower price
but also have to pay additional assembly costs.

We index the layers in the production network by integers $i\geq 0$ with layer 0
consisting only of the most downstream firm. Let $b_i$ be the downstream
boundary of firms on layer $i$, each producing $v_i$ and having $k_i$
suppliers. Then the boundary of firms on the next layer is given by $b_{i+1}
= (b_i - v_i)/k_i$.  We call $(p, \{v_i\},
\{k_i\})$ an equilibrium for the production
network if (i) $p(0) = 0$, (ii) $p(s) - c(s-t) - g(k) - (1+\tau)kp(t/k) \leq
0$ for all $0 \leq t \leq s \leq 1$ and $k\in\NN$, and (iii) $\pi_i = 0$ for
all $i\geq 0$ where
\begin{equation}\label{eq:prof_mp}
    \pi_i := p(b_i) - c(v_i) - g(k_i) - (1+\tau)k_ip
    \left( \frac{b_i - v_i}{k_i} \right).
\end{equation}

As in Section~\ref{sss:sdp}, we seek to find an equilibrium using dynamic
programming methods. Let $p^*$ be the solution to the following Bellman
equation
\begin{equation}\label{eq:mp}
    p(s) = \min_{\substack{0 \leq t\leq s\\ k\in\NN}}
    \left\{
    c(s-t) + g(k) + (1+\tau)kp(t/k)
    \right\}.
\end{equation}
Let $v_i = b_i - t^*(b_i)$ and $k_i = k^*(b_i)$ where $t^*(s)$ and $k^*(s)$
are the minimizers under $p^*$. Let $\iI$ be all continuous $p$ such that
$c'(0)s \leq p(s) \leq c(s)$ for all $s\in [0, 1]$.

\begin{proposition}
	\label{p:eq_mp}
	If $c'(0) > 0$ and $g(k) \to \infty$ as
	$k\to\infty$, then \eqref{eq:mp} has a unique solution $p^*\in\iI$ and $(p^*,
	\{v_i\}, \{k_i\})$ is an equilibrium for the production network.
\end{proposition}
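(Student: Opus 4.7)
The plan is to mirror the strategy used for the linear production chain in Section~\ref{sss:sdp}: introduce the Bellman operator $T$ on $\iI$ via the right-hand side of \eqref{eq:mp}, establish that $T$ has a unique fixed point $p^*$ in $\iI$ by appealing to the general dynamic programming theory developed in Appendix~\ref{ss:gdp}, and then verify that $p^*$ together with the sequences $\{v_i\}$ and $\{k_i\}$ extracted from its optimal policy satisfies the three equilibrium conditions.

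First I would confirm that the infimum in \eqref{eq:mp} is attained, so that $T$ and the policies $t^*, k^*$ are well defined. The constraint set $[0,s]\times\NN$ is not compact, but the hypothesis $g(k)\to\infty$, together with the fact that any $p\in\iI$ satisfies $p\geq c'(0)(\cdot)\geq 0$, implies that the choice $t=0, k=1$ (using $g(1)=0$) yields an upper bound $c(s)$ on the candidate minimum, and once $g(k)>c(s)$ the pair $(t,k)$ cannot be a minimizer. Hence the search reduces to the compact set $[0,s]\times\{1,\dots,K(s)\}$, and a minimizer exists by continuity of $c$ and $p$. Next I would show that $T$ maps $\iI$ into itself: the choice $t=0, k=1$ gives $(Tp)(s)\leq c(s)$, while the lower bound $c'(0)s\leq(Tp)(s)$ follows from the convexity of $c$ combined with $p(t/k)\geq c'(0)\,t/k$, which makes the objective bounded below by $c(s-t)+(1+\tau)c'(0)t\geq c'(0)s$.

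Second I would invoke the fixed point machinery from Appendix~\ref{ss:gdp}. Monotonicity of $T$ on $\iI$ is immediate from the monotonicity of the integrand in $p$, and $T$ is concave because it is the pointwise infimum of functionals that are affine in $p$. These properties, together with $T(\iI)\subset\iI$, match the hypotheses of the general Du-type result used in Appendix~\ref{ss:gdp}, yielding a unique fixed point $p^*\in\iI$ and well-behaved policies $t^*(\cdot), k^*(\cdot)$.

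Finally, equilibrium verification is largely bookkeeping. Setting $s=0$ in \eqref{eq:mp} and using $c(0)=g(1)=0$ forces $p^*(0)=(1+\tau)p^*(0)$, so $p^*(0)=0$, giving condition (i); condition (ii) is just the statement that the bracket in \eqref{eq:mp} at $p^*$ is bounded below by $p^*(s)$ for every admissible $(t,k)$; and condition (iii) is exactly the assertion that $(t^*(b_i), k^*(b_i))$ achieves the minimum, so $\pi_i=0$ by \eqref{eq:prof_mp}. The main obstacle I anticipate is checking that $T$ fits the precise hypotheses of the Appendix~\ref{ss:gdp} theory: the discrete minimization over $k\in\NN$ must not disrupt the continuity and monotonicity structure, and the role of $c'(0)>0$ must be correctly leveraged so that $\iI$ is the right invariant class for obtaining uniqueness (and not merely existence) via the general theorem.
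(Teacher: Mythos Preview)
Your proposal is correct and follows essentially the same route as the paper: cast \eqref{eq:mp} as a Bellman operator on $\iI$, use $g(k)\to\infty$ to reduce to a compact action set, verify the hypotheses of the general theory in Appendix~\ref{ss:gdp} (with $\phi(s)=c'(0)s$, $\psi=c$, and $c'(0)>0$ driving the $A_5$ lower bound), apply Theorem~\ref{t:bk0} to obtain the unique fixed point $p^*$, and then read off the three equilibrium conditions from $p^*\in\iI$ and the Bellman equation. The paper delegates the verification of $A_1$--$A_5$ to \cite{yu2019equilibrium} rather than spelling it out, and obtains $p^*(0)=0$ directly from $p^*\in\iI$ rather than from the $s=0$ Bellman identity you use, but these are cosmetic differences.
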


In Appendix~\ref{ss:networks-proofs}, we show that the unique solution $p^*$ can be
computed by value function iteration. We then prove that $p^*$ induces an
equilibrium allocation. Theorem~\ref{t:con} can also be used to show the
monotonicity of $p^*$.

Figure~\ref{fig:network} plots two production networks with different
transaction costs, where each node corresponds to a firm in the network and
the one in the center is the most downstream firm.\footnote{We set $c(v) =
    v^{1.5}$ and $g(k) =
  0.0001(k-1)^{1.5}$.} The size of each node is proportional to the size of
    the firm, represented by the sum of assembly and transaction costs.
    Figure~\ref{fig:network} shows that more downstream firms are larger and
    have more upstream suppliers. Comparing panels (A) and (B), we can see
    that lower transaction costs increase the number of firms involved in the
    production network, encouraging the expansion of snakes. This is in line
    with the model prediction of \cite{baldwin2013spiders} that decreasing
    frictions leads to a finer fragmentation of the
    production.\footnote{\citeauthor{tyazhelnikov2019production}'s \citeyearpar{tyazhelnikov2019production} model of international
        production chains shares some features with the model above.  His
        model nests both snakes and spiders.  Each firm makes optimal decision
        conditional on its production location at the next stage. If we
        interpret market transactions as offshoring, the multiple upstream
        supplier model becomes a model in which firms decide to produce parts
    of a production chain in any number of countries.}



\begin{figure}[tb!]
  \centering
  \begin{subfigure}{0.5\textwidth}
    \centering
    \includegraphics[height=0.7\textwidth]{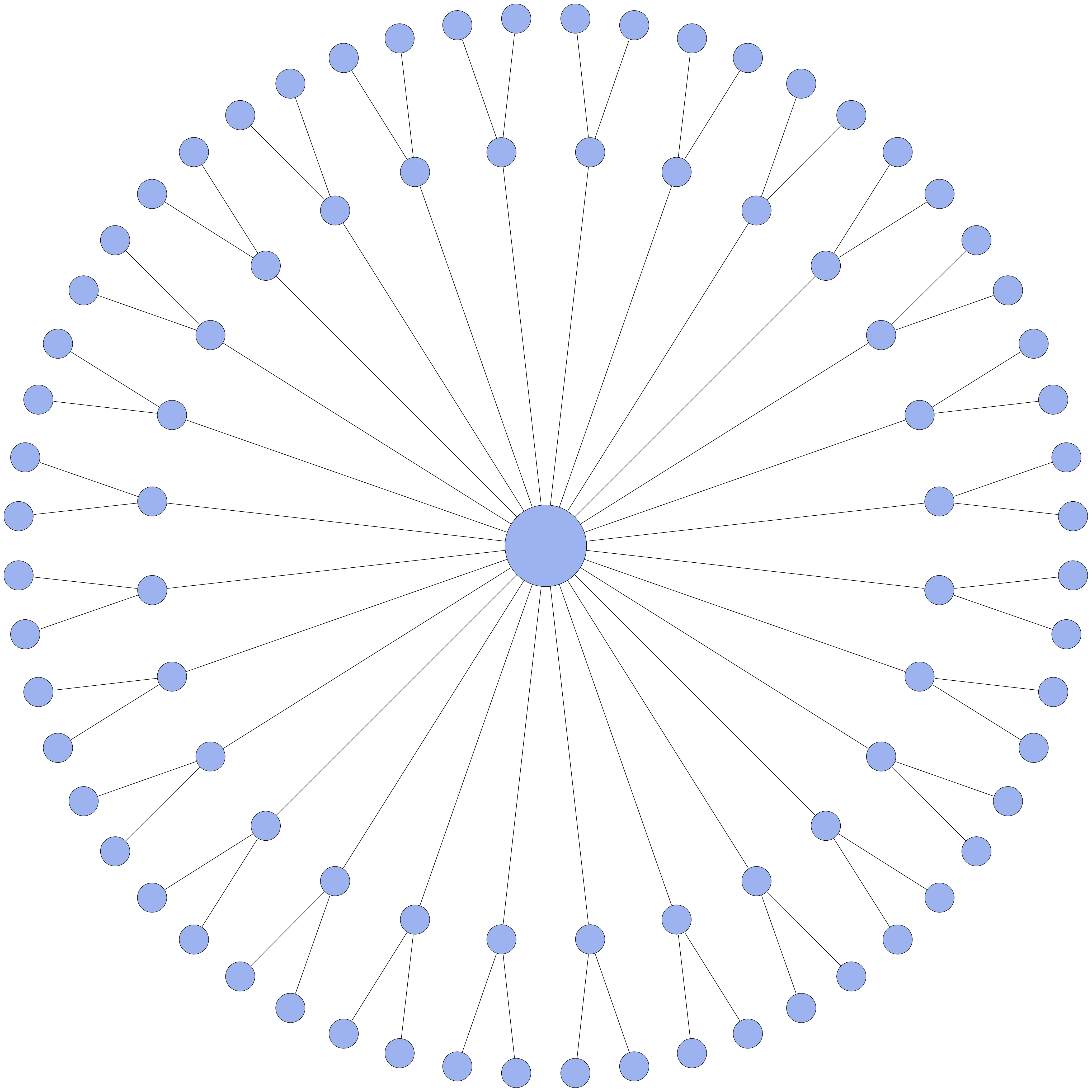}
    \caption{$\tau = 0.2$}
  \end{subfigure}%
  \begin{subfigure}{0.5\textwidth}
    \centering
    \includegraphics[height=0.7\textwidth]{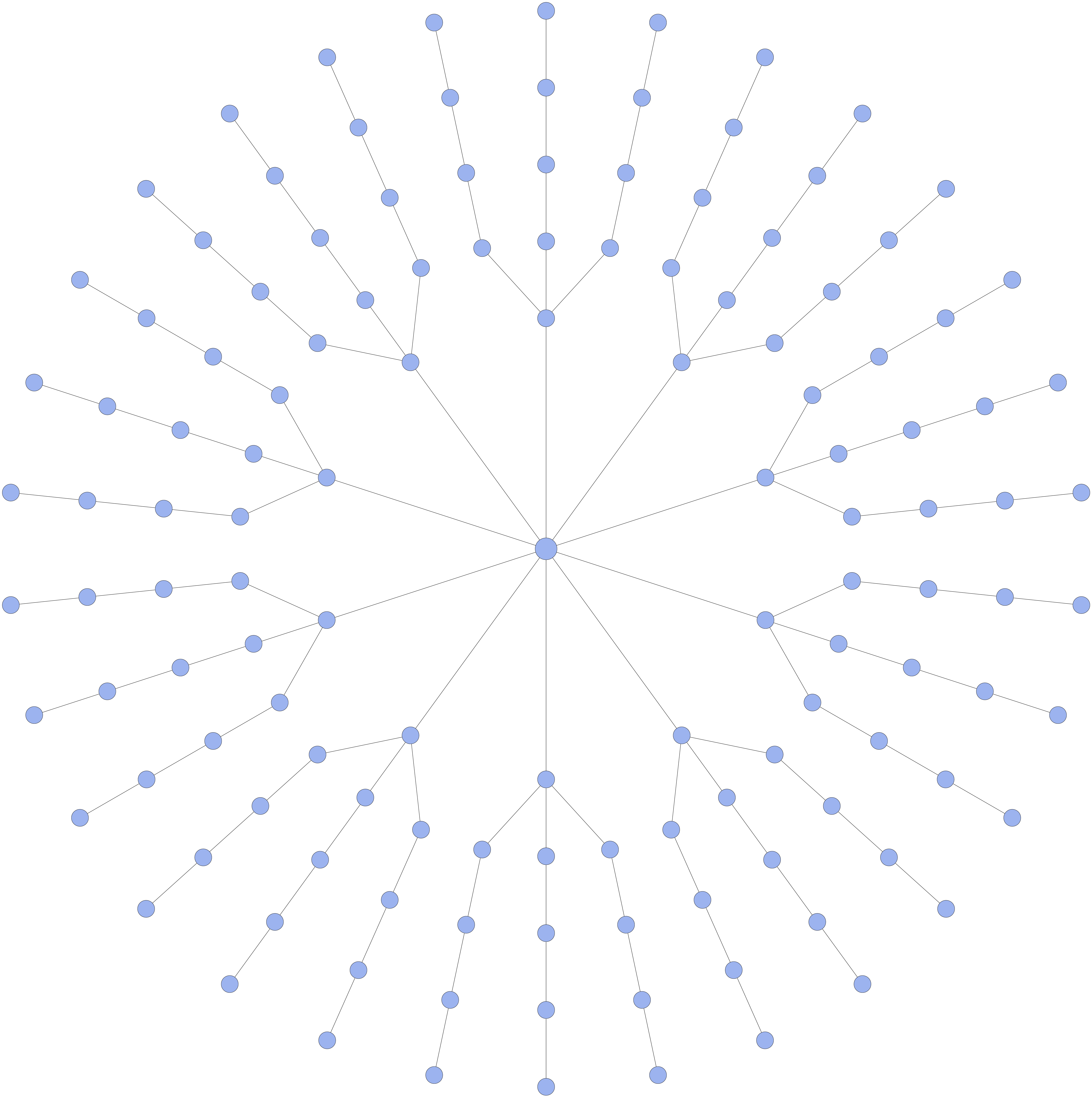}
    \caption{$\tau = 0.05$}
  \end{subfigure}
  \caption{Examples of production networks.}
  \label{fig:network}
\end{figure}


\section{Conclusion}

\label{s:c}

This paper shows how competitive equilibria in a range of production chain and
network models can be recovered as solutions to dynamic programming problems.
Equilibrium prices are identified with the value function of a dynamic
program, while competitive allocations across firms are identified with
choices under the optimal policy.  Dynamic programming methods are brought to
bear on both the theory of the firm and the structure of production networks,
providing new insights, as well as new analytical and computational methods.
In addition to production problems, we also consider related competitive
problems from economic geography and firm management.

Apart from the model of snakes and spiders in Section~\ref{ss:network}, all of
the problems faced by individual firms are convex.  This assumption allowed us to obtain
sharp results and useful characterizations.  An important remaining task is to
extend our results to a range of cases that feature non-convexities. This work
is left for future research.

\appendix
\section{Appendix}

\subsection{A General Dynamic Programming Framework}\label{ss:gdp}

In this section, we provide a general dynamic programming framework suitable
for analyzing equilibria in production networks. 

\subsubsection{Set Up}

Given a metric space $E$, let $\RR^E$ denote the set of functions from $E$
to $\RR$ and let $c\RR^E$ be all continuous functions in $\RR^E$. Given $g,
h \in \RR^E$, we write $g \leq h$ if $g(x) \leq h(x)$ for all $x \in E$, and
$\| f\| := \sup_{x \in E} |f(x)|$.

Let $X$ be a compact metric space. Let $A$
be a metric space and let $G$ be a nonempty, continuous, compact-valued
correspondence from $X$ to $A$. We understand $G(x)$ as the set of
available actions $a \in A$ for an agent in state $x$. Let $F_G :=
\setntn{(x,a)}{x \in X,\; a \in G(x)}$ be all feasible state-action pairs.
Let $L$ be an aggregator, mapping $F_G \times
\RR^X$ into $\RR$, with the interpretation that $L(x, a, w)$ is lifetime
loss associated with current state $x$, current action $a$ and continuation
value function $w$.
A pair $(L, G)$ with these properties is referred to as a
\emph{dynamic program}. 

The Bellman operator associated with
such a pair is the operator $T$ defined by
\begin{equation}\label{eq:genbe}
    (Tw)(x) = \inf_{a \in G(x)} L(x, a, w)
    \qquad (w \in \RR^X, \; x \in X).
\end{equation}
A fixed point of $T$ in $\RR^X$ is said to satisfy the Bellman
  equation.

\subsubsection{Fixed Point Results}\label{sss:fp}

Fix a dynamic program $(L, G)$ and consider the following
assumptions:

\begin{itemize}
    \item[$A_1$.] $(x,a) \mapsto L(x, a, w)$ is continuous on $F_G$
        when $w \in c\RR^X$.
    \item[$A_2$.] If $u, v \in c\RR^X$ with $u \leq v$, then
        $L(x, a, u) \leq L(x, a, v)$ for all $(x,a) \in F_G$.
    \item[$A_3$.] Given $\lambda \in (0, 1)$, $u, v \in c\RR^X$ and $(x,a) \in
        F_G$, we have
        \begin{equation*}
            \lambda L(x, a, u) + (1-\lambda) L(x, a, v)
            \leq L(x, a, \lambda u + (1-\lambda) v).
        \end{equation*}
    \item[$A_4$.] There is a $\psi$ in $c\RR^X$ such that $T\psi \leq \psi$.
    \item[$A_5$.] There is a $\phi$ in $c\RR^X$ and an $\epsilon > 0$ such that
        $\phi \leq \psi$ and $T\phi \geq \phi + \epsilon(\psi - \phi)$.
\end{itemize}

Assumptions $A_1$--$A_3$ impose some continuity, monotonicity and concavity.
Assumptions $A_4$--$A_5$ provide upper and lower bounds for the set of
candidate value functions.

Although contractivity is not imposed, we can show that the Bellman
operator \eqref{eq:genbe} is well behaved under $A_1$--$A_5$ after
restricting its domain to a suitable class of candidate solutions. To this
end, let
\begin{equation*}
    \iI := \setntn{ f \in c\RR^X}{\phi \leq f \leq \psi}.
\end{equation*}

\begin{theorem}\label{t:bk0}
    Let $(L, G)$ be a dynamic program and let $T$ be the Bellman
    operator defined in~\eqref{eq:genbe}. If $(L, G)$ satisfies
    $A_1$--$A_5$, then
    \begin{enumerate}
        \item $T$ has a unique fixed point $w^*$ in $\iI$.
        \item For each $w \in
            \iI$, there exists an $\alpha < 1$ and $M < \infty$ such that
            \begin{equation}
                \| T^n w - w^* \| \leq \alpha^n M
                \quad \text{ for all } \, n \in \NN.
            \end{equation}
        \item $\pi^* (x) := \argmin_{a \in G(x)} L(x, a, w^*)$ is upper
            hemicontinuous on $X$.  
    \end{enumerate}
\end{theorem}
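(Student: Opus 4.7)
The plan is to verify that the restriction of $T$ to the order interval $\iI = [\phi, \psi]$ is a continuous, monotone, concave self-map that satisfies a strict improvement condition at the lower endpoint, and then invoke the fixed-point theorem of \cite{du1989fixed} for increasing concave operators on an order interval.

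First I would check $T(\iI) \subseteq \iI$. Continuity of $Tw$ whenever $w \in c\RR^X$ follows from Berge's maximum theorem, using assumption $A_1$ (joint continuity of $L(x, a, w)$ in $(x,a)$) together with the hypothesis that $G$ is nonempty, continuous, and compact-valued. For $w \in \iI$, monotonicity $A_2$ combined with $A_4$ gives $Tw \leq T\psi \leq \psi$, while $A_2$ combined with $A_5$ gives $Tw \geq T\phi \geq \phi + \epsilon(\psi - \phi) \geq \phi$ since $\phi \leq \psi$.

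Next I would record the two structural properties that drive the argument. Monotonicity of $T$ on $\iI$ is inherited directly from $A_2$. Concavity of $T$ follows from $A_3$ by interchanging infimum and convex combination:
\begin{equation*}
    T(\lambda u + (1-\lambda)v)(x)
    \geq \inf_{a \in G(x)} \bigl[\lambda L(x,a,u) + (1-\lambda) L(x,a,v)\bigr]
    \geq \lambda(Tu)(x) + (1-\lambda)(Tv)(x).
\end{equation*}
With these properties in hand, I would apply Du's theorem to conclude that $T$ has a unique fixed point $w^*$ in $\iI$. The key steps in Du's argument, adapted here, are to run the iterates $\phi_n := T^n \phi$ and $\psi_n := T^n \psi$. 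By the previous paragraph these are monotone in $n$, bracket every other iterate from $\iI$, and converge pointwise (hence, by Dini's theorem on the compact $X$, uniformly) to functions $\phi^* \leq \psi^*$ that are both fixed points of $T$. The strict improvement $T\phi \geq \phi + \epsilon(\psi - \phi)$ from $A_5$, propagated through the iteration by concavity, forces $\phi^* = \psi^* =: w^*$ and upgrades uniform convergence to geometric convergence, yielding $\|T^n w - w^*\| \leq \alpha^n M$ for all $w \in \iI$.

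The main obstacle is this last upgrade. The idea is to track, at each $n$, the largest scalar $\mu_n \in [0,1]$ for which $\psi_n - \phi_n \leq (1-\mu_n)(\psi - \phi)$, and to show, using monotonicity and concavity of $T$ together with the gap condition in $A_5$, that $1 - \mu_{n+1} \leq (1-\epsilon)(1 - \mu_n)$. This gives $\|\psi_n - \phi_n\| \leq (1-\epsilon)^n \|\psi - \phi\|$, which yields both uniqueness inside $\iI$ and the rate claim with $\alpha = 1-\epsilon$ and $M$ depending only on $\|\psi - \phi\|$. Finally, upper hemicontinuity of $\pi^*(x) := \argmin_{a \in G(x)} L(x, a, w^*)$ is immediate from Berge's maximum theorem, since $w^* \in c\RR^X$ by part~1, so $(x,a) \mapsto L(x, a, w^*)$ is continuous on $F_G$ by $A_1$, and $G$ is continuous and compact-valued by hypothesis.
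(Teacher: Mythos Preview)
Your proposal is correct and follows the same route as the paper: verify that $T$ is a monotone concave self-map on the order interval $\iI=[\phi,\psi]$ satisfying $T\psi\leq\psi$ and $T\phi\geq\phi+\epsilon(\psi-\phi)$, then invoke Du's fixed-point theorem, with Berge's maximum theorem supplying continuity of $Tw$ and upper hemicontinuity of $\pi^*$. The only difference is that the paper simply cites Du, whereas you sketch the internal $\mu_n$-tracking argument; note that your appeal to Dini is both unnecessary and slightly imprecise (Dini presupposes continuity of the limit), but this is harmless since your geometric estimate $\|\psi_n-\phi_n\|\leq(1-\epsilon)^n\|\psi-\phi\|$ already delivers uniform convergence directly.
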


The fixed point results in Theorem~\ref{t:bk0} rely on the monotonicity and
concavity of the Bellman operator. See Section~\ref{ss:gdp-proofs} for
details of the arguments and the proof of the theorem.

Theorem~\ref{t:bk0} does not discuss Bellman's principle of optimality. That
task is left until Section~\ref{sss:opt}. Regarding $\pi^*$, which has the
interpretation of a policy correspondence, an immediate corollary is that
$\pi^*$ is continuous whenever $\pi^*$ is single-valued on $X$.

\subsubsection{Shape and Smoothness Properties}

We now give conditions under which the solution to the Bellman equation
associated with a dynamic program possesses additional properties,
including monotonicity, convexity and differentiability. In what follows, we
assume that $X$ is convex in $\RR$ and $F_G$ is convex in $X \times A$. We
let
\begin{enumerate}
    \item $ic\RR^X$ be all increasing functions in $c\RR^X$ and
    \item $cc\RR^X$ be all convex functions in $c\RR^X$.
\end{enumerate}
We assume that $\iI$ defined above contains at least one element of each
set. The following assumption is needed for convexity and differentiability.

\begin{assumption}
    \label{a:condiff}
    In addition to $A_1$--$A_5$, the dynamic program $(L, G)$ satisfies the following conditions:
    \begin{enumerate}
        \item If $w \in cc\RR^X$,
            then $(x, a) \to L(x, a, w)$ is strictly convex on $F_G$.
        \item If $a \in \interior G(x)$ and $w \in cc\RR^X$, then $x \to L(x, a, w)$ is differentiable on $\interior X$.
    \end{enumerate}
\end{assumption}

We can now state the following result.

\begin{theorem}
    \label{t:con}
    If $Tw$ is strictly increasing for all $w\in ic\RR^X$, then $w^*$ is strictly increasing.
    If Assumption~\ref{a:condiff} holds, then 
    $w^*$ is strictly convex, $\pi^*$ is single-valued, $w^*$ is differentiable
    on $\interior X$ and
    \begin{equation}
        \label{eq:diff}
        (w^*)'(x) = L_x(x, \pi^*(x), w^*)
    \end{equation}
    whenever $\pi^*(x)\in \interior G(x)$. 
\end{theorem}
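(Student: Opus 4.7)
The plan is to handle the four claims (strict monotonicity of $w^*$, strict convexity of $w^*$, single-valuedness of $\pi^*$, and the envelope formula) by propagating each shape property from a suitable sub-class of $\iI$ through the Bellman iteration, and then invoking uniform convergence $T^n w \to w^*$ from Theorem~\ref{t:bk0}. A preliminary observation is that $T$ maps $\iI$ into itself: $A_5$ together with $\phi \leq \psi$ gives $\phi \leq T\phi$, $A_4$ gives $T\psi \leq \psi$, monotonicity $A_2$ sandwiches intermediate values, and continuity of $Tw$ when $w$ is continuous follows from $A_1$ and Berge's maximum theorem applied to the continuous, compact-valued $G$.

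For strict monotonicity, I would pick $w_0 \in \iI \cap ic\RR^X$ (nonempty by the standing assumption made before the statement). The hypothesis that $Tw$ is strictly increasing whenever $w \in ic\RR^X$ keeps the iterates inside $ic\RR^X$, so $w^* = \lim_n T^n w_0$ is at least increasing by uniform convergence, and one further application of the hypothesis to $w^* = Tw^*$ upgrades ``increasing'' to ``strictly increasing''. For strict convexity I would repeat the same template with $cc\RR^X$ in place of $ic\RR^X$. The key lemma is that $T$ sends $\iI \cap cc\RR^X$ into the strictly convex continuous functions: given $w \in cc\RR^X$ and $x_1 \neq x_2$, pick minimizers $a_i \in G(x_i)$ for $Tw(x_i)$, use convexity of $F_G$ to make $(\lambda x_1 + (1-\lambda)x_2, \lambda a_1 + (1-\lambda)a_2)$ feasible, and invoke Assumption~\ref{a:condiff}(1) to get
\begin{equation*}
Tw\bigl(\lambda x_1 + (1-\lambda)x_2\bigr) \leq L\bigl(\lambda x_1 + (1-\lambda)x_2,\, \lambda a_1 + (1-\lambda) a_2,\, w\bigr) < \lambda Tw(x_1) + (1-\lambda) Tw(x_2).
\end{equation*}
Iteration and passage to the limit yield $w^*$ convex, and one more application of $T$ upgrades to strict convexity. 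Single-valuedness of $\pi^*$ then follows because $a \mapsto L(x, a, w^*)$ is strictly convex on the convex slice $G(x)$ (convex since $F_G$ is).

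For the envelope formula I would run a Benveniste--Scheinkman argument adapted to loss minimization. Fix $x_0 \in \interior X$ with $a_0 := \pi^*(x_0) \in \interior G(x_0)$ and set $h(x) := L(x, a_0, w^*)$. Assumption~\ref{a:condiff}(2) gives differentiability of $h$ at $x_0$, and suboptimality at nearby $x$ gives $h(x) \geq w^*(x)$ with equality at $x_0$, provided $a_0$ remains feasible near $x_0$. Convexity of $w^*$ supplies a nonempty subdifferential $\partial w^*(x_0)$, and any subgradient $p$ obeys
\begin{equation*}
h'(x_0)(x-x_0) + o(|x-x_0|) = h(x) - h(x_0) \geq w^*(x) - w^*(x_0) \geq p(x - x_0);
\end{equation*}
letting $x \to x_0$ from both sides in $\RR$ forces $p = h'(x_0)$, so the subdifferential collapses to $\{h'(x_0)\}$ and $(w^*)'(x_0) = L_x(x_0, a_0, w^*)$, as required. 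The main obstacle I expect is the local feasibility claim $a_0 \in G(x)$ for $x$ near $x_0$: interiority of $a_0$ in $G(x_0)$ alone is not automatic, and I would extract it from the convexity of $F_G$ by arguing that $(x_0, a_0)$ is interior to $F_G$ whenever $x_0 \in \interior X$ and $a_0 \in \interior G(x_0)$, so that $(x, a_0) \in F_G$ on a full neighborhood of $x_0$.
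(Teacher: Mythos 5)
Your proposal is correct and follows essentially the same route as the paper: propagate monotonicity and convexity through the closed classes $ic\RR^X$ and $cc\RR^X$ using the identical midpoint computation with the convex combination of minimizers, deduce single-valuedness of $\pi^*$ from strict convexity of $a \mapsto L(x,a,w^*)$, and prove the envelope formula by a Benveniste--Scheinkman comparison of $w^*$ with the differentiable convex majorant $y \mapsto L(y,\pi^*(x),w^*)$. The only minor divergence is the local feasibility step: the paper obtains $\pi^*(x) \in G(y)$ for $y$ near $x$ from continuity of the correspondence $G$, whereas you extract it from convexity of $F_G$; both work, and your explicit subdifferential argument simply fills in the citation to \cite{benveniste1979differentiability}.
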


\subsubsection{The Principle of Optimality}\label{sss:opt}

If we consider the implications of the preceding dynamic programming theory,
we have obtained existence of a unique solution to the Bellman equation and
certain other properties, but we still lack a definition of optimal
policies, and a set of results that connect optimality and solutions to the
Bellman equation. This section fills these gaps.

Let $\Pi$ be all $\pi \colon X \to A$ such that $\pi(x) \in G(x)$ for all
$x\in X$. For each $\pi\in\Pi$ and $w\in \RR^X$, define the operator $T_\pi$
by
\begin{equation}
    \label{eq:Tpi}
    (T_\pi w)(x) = L(x, \pi(x), w).
\end{equation}
This can be understood as the lifetime loss of an agent following $\pi$ with
continuation value $w$. Let $\mM$ be the set of \emph{(nonstationary)
  policies}, defined as all $\mu = \{\pi_0, \pi_1, \ldots\}$ such that
$\pi_t \in \Pi$ for all $t$. For stationary policy $\{\pi, \pi, \ldots\}$,
we simply refer it as $\pi$. Let the \emph{$\mu$-value function} be defined
as
\begin{equation}
    \label{eq:piv}
    w_\mu(x) := \limsup_{n\to\infty} (T_{\pi_0} T_{\pi_1}\ldots T_{\pi_n} \phi)(x),
\end{equation}
where $\phi$ is the lower bound function in $\iI$.
Note that $w_\mu$ is always well
defined. The agent's problem is to minimize $w_\mu$ by choosing a policy in
$\mM$. The \emph{value function} $\bar{w}$ is defined by
\begin{equation}
    \label{eq:value}
    \bar{w}(x) := \inf_{\mu\in\mM} w_\mu(x)
\end{equation}
and the \emph{optimal policy} $\bar{\mu}$ is such that $\bar{w} =
w_{\bar{\mu}}$. We impose the following assumption.

\begin{assumption}\label{a:contbeta}
    In addition to $A_1$--$A_5$, the dynamic program $(L, G)$ satisfies the following conditions:
    \begin{enumerate}
        \item If $(x, a) \in F_G$, $v_n \geq \phi$ and $v_n \uparrow v$, then  $L(x, a, v_n) \to L(x, a, v)$.
        \item There exists a $\beta > 0$ such that, for all $(x, a) \in F_G$, $r > 0$ and $w \geq \phi$,
            \begin{equation}
                \label{eq:beta}
                L(x, a, w + r) \leq L(x, a, w) + \beta r    .
            \end{equation}
    \end{enumerate}
\end{assumption}

Part 1 of Assumption~\ref{a:contbeta} is a weak continuity requirement on the
aggregator with respect to the continuation value, similar to Assumption~4 in
\cite{bloise2018convex}. Part~2 of Assumption~\ref{a:contbeta} is analogous to
the Blackwell's condition, with the significant exception that $\beta$ in
\eqref{eq:beta} is not restricted to be less than one. 

\begin{theorem}
    \label{t:opt}
    If Assumption~\ref{a:contbeta} holds, then $w^* =
    \bar{w}$ and an optimal stationary policy exists. Moreover, a
    stationary policy $\pi$ is optimal if and only if $T_\pi \bar{w} = T
    \bar{w}$.
\end{theorem}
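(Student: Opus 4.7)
The plan is to prove $w^* = \bar{w}$ by sandwiching, then derive the characterization by reusing the same mechanism. By Theorem~\ref{t:bk0}, $w^* \in \iI$ and $Tw^* = w^*$; since $L(x,\cdot,w^*)$ is continuous on the compact set $G(x)$ by $A_1$, the infimum in $(Tw^*)(x) = \inf_{a \in G(x)} L(x, a, w^*)$ is attained for every $x$, so a pointwise selection (no measurability is required, since $\Pi$ only demands $\pi(x) \in G(x)$) yields a stationary policy $\pi^* \in \Pi$ with $T_{\pi^*} w^* = T w^* = w^*$. For the lower bound $\bar{w} \geq w^*$, fix any $\mu = \{\pi_t\} \in \mM$ and observe that $T_\pi u \geq T u$ pointwise for any $u$ and $\pi \in \Pi$ (since $Tu$ is an infimum), while each $T_\pi$ and $T$ is monotone by $A_2$; a straightforward induction on $n$ then yields $T_{\pi_0} T_{\pi_1} \cdots T_{\pi_n} \phi \geq T^{n+1} \phi$. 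By Theorem~\ref{t:bk0}(2), $T^{n+1}\phi \to w^*$ uniformly, so taking the limsup gives $w_\mu \geq w^*$, and hence $\bar{w} \geq w^*$.

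\textbf{Upper bound via $\pi^*$.} The harder half is $\bar{w} \leq w^*$, which I would obtain by showing $w_{\pi^*} = w^*$. Define $v_n := T_{\pi^*}^n \phi$. From $T_{\pi^*}\phi \geq T\phi \geq \phi$ (using $A_5$) and monotonicity, $\{v_n\}$ is increasing; from $\phi \leq w^*$ and $T_{\pi^*}^n w^* = w^*$, it is bounded above by $w^*$. Hence $v_n \uparrow v \leq w^*$ pointwise. Applying Part~1 of Assumption~\ref{a:contbeta} to the continuation-value slot of $L$, $L(x, \pi^*(x), v_n) \to L(x, \pi^*(x), v)$, that is, $v_{n+1} = T_{\pi^*} v_n \to T_{\pi^*} v$; combined with $v_{n+1} \to v$, this delivers the fixed-point identity $v = T_{\pi^*} v$. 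To upgrade $v \leq w^*$ to equality, note $Tv \leq T_{\pi^*} v = v$, so by monotonicity $T^n v$ is decreasing in $n$, while $T^n v \geq T^n \phi \to w^*$ by Theorem~\ref{t:bk0}(2); the squeeze forces $v \geq w^*$, hence $v = w^*$. Therefore $w_{\pi^*} = v = w^*$, and combined with the lower bound, $\bar{w} = w^* = w_{\pi^*}$, so $\pi^*$ is an optimal stationary policy.

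\textbf{Characterization and main obstacle.} The characterization then follows by reusing the mechanism above. If $T_\pi \bar{w} = T \bar{w} = \bar{w}$, repeat the upper-bound argument with $\pi$ in place of $\pi^*$: the sequence $T_\pi^n \phi$ is increasing and bounded above by $T_\pi^n \bar{w} = \bar{w}$, its limit $\tilde v$ satisfies $\tilde v = T_\pi \tilde v$ by Part~1 of Assumption~\ref{a:contbeta}, and the same $Tv\leq v$ squeeze forces $\tilde v = \bar{w}$, so $w_\pi = \bar{w}$ and $\pi$ is optimal. Conversely, if $\pi$ is optimal then $T_\pi^n \phi$ is increasing and converges to the finite value $w_\pi = \bar{w}$; applying Part~1 of Assumption~\ref{a:contbeta} once more yields $T_\pi^{n+1}\phi \to T_\pi \bar{w}$, hence $T_\pi \bar{w} = \bar{w} = T\bar{w}$. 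The main obstacle is the identification $v = w^*$ in the upper-bound step: monotone convergence only gives $v\leq w^*$, and closing the gap requires combining the semi-fixed-point inequality $Tv \leq v$ with the uniform convergence $T^n \phi \to w^*$ furnished by Theorem~\ref{t:bk0}(2); Part~2 of Assumption~\ref{a:contbeta} contributes indirectly by ensuring that successive applications of $T$ and $T_\pi$ to functions trapped between $\phi$ and $\psi$ stay bounded and cannot diverge en route.
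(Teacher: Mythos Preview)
Your argument is correct and takes a genuinely different route from the paper. The paper's proof is essentially a reduction: it verifies that the dynamic program satisfies the \emph{monotone increasing} hypotheses of two abstract results from \cite{bertsekas2013abstract} (Lemmas~\ref{l:MI} and~\ref{l:opt}), checks compactness of the sublevel sets $G_k(x,\lambda)$, and then invokes those lemmas to obtain $\lim_n T^n\phi = \bar w$ (hence $\bar w = w^*$), existence of an optimal stationary policy, and the characterization. Your proof is self-contained: you sandwich $\bar w$ between $w^*$ from below (via $T_{\pi_0}\cdots T_{\pi_n}\phi \geq T^{n+1}\phi$) and from above (via $w_{\pi^*} = w^*$), and the key identification $v = w^*$ comes from the neat squeeze $v \geq T^n v \geq T^n\phi \to w^*$ combined with $v \leq w^*$. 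The characterization then recycles the same monotone-limit machinery. What each buys: the paper's approach is shorter on the page but outsources the substance; yours exposes exactly which structural ingredients drive the conclusion and avoids external references. Two small remarks: first, your closing comment about Part~2 of Assumption~\ref{a:contbeta} is off---your argument never actually uses it, since boundedness of the iterates follows from the order-interval structure and monotonicity alone (Part~2 is what the paper needs to fit Bertsekas's framework, not what you need); second, both you and the paper implicitly extend the monotonicity in $A_2$ from $c\RR^X$ to general functions (since $v$ and $T_{\pi}^n\phi$ need not be continuous), which is harmless in all the applications but worth flagging.
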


Theorem~\ref{t:opt} shows that the fixed point of the Bellman operator is
the value function and the Bellman's principle of optimality holds. It
immediately follows that any selector of $\pi^*$ in Theorem~\ref{t:bk0} is
an optimal stationary policy.

\subsection{Proofs for the General Theory}\label{ss:gdp-proofs}

To prove Theorem~\ref{t:bk0}, we first give a fixed point theorem for
monotone concave operators on a partially ordered Banach space due to
\cite{du1989fixed}.\footnote{The theory of monotone concave operators dates
  back to \cite{krasnoselskii1964positive}. Similar treatments include, for
  example, \cite{guo1988nonlinear}, \cite{guo2004partial}, and
  \cite{zhang2013variational}.}

\begin{theorem}[\citealp{du1989fixed}]
    \label{t:du}
    Let $P$ be a normal cone on a real Banach space $E$.\footnote{A cone
      $P\subset E$ is said to be normal if there exists $\delta>0$ such that
      $\|x + y\| \geq \delta$ for all $x, y\in P$ and $\|x\| = \|y\| = 1$.}
    Suppose $u_0, v_0 \in E$ with $u_0 < v_0$ and $A: [u_0, v_0] \to E$ is
    an increasing concave operator. If $Au_0 \geq u_0 + \epsilon (v_0 -
    u_0)$ for some $\epsilon \in (0, 1)$ and $Av_0 \leq v_0$, then $A$ has a
    unique fixed point $x^*$ in $[u_0, v_0]$. Furthermore, for any $x\in
    [u_0, v_0]$ and $n \in \NN$, $\|A^n x - x^*\| \leq M (1-\epsilon)^n$ for
    some $M > 0$.
\end{theorem}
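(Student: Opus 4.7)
My plan is to prove Du's theorem by constructing two iterates that squeeze toward the fixed point from below and above, using concavity to show the gap contracts geometrically. Without loss of generality I may assume $u_0 = 0$: replacing $A$ by $B(y) := A(y + u_0) - u_0$ preserves monotonicity and concavity, sends $0$ to something $\geq \epsilon(v_0 - u_0)$, and sends $v_0 - u_0$ to something $\leq v_0 - u_0$. So I set $u_0 = 0$, write $w_0 := v_0 > 0$, and define $u_n := A^n(0)$ and $v_n := A^n(v_0)$. Monotonicity of $A$ together with $A(0) \geq \epsilon v_0 \geq 0$ and $A(v_0) \leq v_0$ gives that $\{u_n\}$ is increasing, $\{v_n\}$ is decreasing, and $u_n \leq v_n$ for every $n$.

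The heart of the proof is the claim $u_n \geq t_n v_n$, where $t_n := 1 - (1-\epsilon)^n$. The base case is trivial. For the inductive step, I use monotonicity to get $u_{n+1} = A u_n \geq A(t_n v_n)$, then concavity applied to $t_n v_n = t_n v_n + (1-t_n)\cdot 0$ yields
\begin{equation*}
    A(t_n v_n) \;\geq\; t_n A v_n + (1-t_n) A(0) \;\geq\; t_n v_{n+1} + (1-t_n)\epsilon v_0.
\end{equation*}
Since $v_{n+1} \leq v_0$ and $t_{n+1} - t_n = \epsilon(1-\epsilon)^n = \epsilon(1-t_n)$, the right-hand side dominates $t_{n+1} v_{n+1}$, closing the induction. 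Consequently $0 \leq v_n - u_n \leq (1-t_n) v_n \leq (1-\epsilon)^n v_0$ in the order of $P$, and normality of $P$ yields $\|v_n - u_n\| \leq N(1-\epsilon)^n \|v_0\|$ for some constant $N$ determined by the cone.

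From $0 \leq u_{n+m} - u_n \leq v_n - u_n$, normality again gives $\{u_n\}$ is Cauchy, hence $u_n \to x^*$ for some $x^* \in E$; the same bound forces $v_n \to x^*$. Since $P$ is closed, $u_n \leq x^* \leq v_n$ for every $n$. Applying $A$ and using monotonicity gives $u_{n+1} \leq Ax^* \leq v_{n+1}$, so $\|A x^* - u_{n+1}\| \leq N\|v_{n+1} - u_{n+1}\| \to 0$, which forces $Ax^* = x^*$ without any continuity hypothesis on $A$. For uniqueness, any other fixed point $y^* \in [u_0, v_0]$ is sandwiched by $u_n \leq A^n y^* = y^* \leq v_n$, so $y^* = x^*$ by the same squeeze. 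Finally, for arbitrary $x \in [u_0, v_0]$, monotonicity gives $u_n \leq A^n x \leq v_n$, and combining with $u_n \leq x^* \leq v_n$ bounds $\pm(A^n x - x^*)$ by $v_n - u_n$ in the order; normality then supplies $\|A^n x - x^*\| \leq M(1-\epsilon)^n$ with $M = N^2\|v_0\|$.

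The main obstacle is identifying the correct recursion for the lower iterates; a naive attempt to bound $v_n - u_n$ by $(1-\epsilon)(v_{n-1} - u_{n-1})$ stalls because concavity applied directly to the gap gives the wrong inequality. The device above, propagating the affine lower bound $u_n \geq t_n v_n$ rather than controlling the gap directly, is what makes the concavity assumption $A(\lambda x + (1-\lambda) 0) \geq \lambda Ax + (1-\lambda) A(0)$ bite against the hypothesis $A(0) \geq \epsilon v_0$ at every step.
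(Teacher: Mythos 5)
The paper states Theorem~\ref{t:du} as a cited result from \cite{du1989fixed} and does not supply a proof, so there is nothing in the paper to compare against. Your argument is correct and is essentially the standard (and original) proof of Du's theorem: translate so that $u_0=0$, iterate from both endpoints to obtain monotone sequences $u_n \uparrow$ and $v_n \downarrow$, propagate the affine inequality $u_n \geq \bigl(1-(1-\epsilon)^n\bigr)v_n$ by combining monotonicity, concavity along the segment from $0$ to $v_n$, the hypothesis $A(0)\geq\epsilon v_0$, and $v_{n+1}\leq v_0$, and then use normality of $P$ to turn the resulting order-theoretic squeeze $0\leq v_n-u_n\leq(1-\epsilon)^n v_0$ into a geometric norm bound. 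Two minor remarks: you implicitly use that $P$ is closed when passing order inequalities to the limit ($u_n\leq x^*\leq v_n$), which is part of the standard definition of a cone and is fine to assume; and the final constant should be something like $M=2N^2\|v_0\|$ rather than $N^2\|v_0\|$, since bounding $\|A^nx-x^*\|$ requires two applications of normality plus the triangle inequality (e.g.\ $\|A^nx-u_n\|\leq N\|v_n-u_n\|$ and $\|x^*-u_n\|\leq N\|v_n-u_n\|$), though this is immaterial because the statement only asserts that some $M>0$ exists.
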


\begin{proof}[Proof of Theorem~\ref{t:bk0}]
    By $A_1$ and Berge's theorem of the maximum, $Tw$ is continuous.  Hence $T$
    maps $c\RR^X$ to itself.  It follows directly from $A_2$ that $T$ is
    \emph{isotone} on $c\RR^X$, in the sense that $u \leq v$ implies $Tu \leq
    Tv$. Conditions $A_4$--$A_5$ and
    the isotonicity of $T$ imply that, when $\phi \leq w \leq \psi$,
    we have
        $\phi \leq T \phi \leq Tw \leq T \psi \leq \psi$.
    In particular, $T$ is an isotone self-map on $\iI$.

    The Bellman operator is also concave on $\iI$, in the sense that
    \begin{equation}\label{eq:defconcave}
        0 \leq \lambda \leq 1 \text{ and } u, v \in \iI \text{ implies }
        \lambda Tu + (1-\lambda) Tv \leq T(\lambda u + (1-\lambda) v).
    \end{equation}
    Indeed, fixing such $\lambda, u, v$ and applying $A_3$, we have
    \begin{equation*}
        \min_{a \in G(x)} 
        \left\{ 
            \lambda L(x, a, u) + (1-\lambda) L(x, a, v)
        \right\}
        \leq \min_{a \in G(x)} 
            L(x, a, \lambda u + (1-\lambda) v)
    \end{equation*}
    for all $x \in X$.  Since, for any pair of real valued functions $f,g$
    we have $\min_a f(a) + \min_a g(a) \leq \min_a \{f(a) + g(a)\}$, it follows
    that~\eqref{eq:defconcave} holds.

    The preceding analysis shows that $T$ is an isotone concave self-map on
    $\iI$. In addition, by $A_4$ and $A_5$, we have $T \psi \leq \psi$ and $T \phi \geq \phi +
    \epsilon (\psi - \phi)$ for some $\epsilon > 0$.  Since $\iI$ is an order
    interval in the positive cone of the Banach space $(c\RR^X, \| \cdot\|)$,
    and since that cone is normal and solid, the first two claims in
    Theorem~\ref{t:bk0} are now confirmed via Theorem~\ref{t:du}.
    The final claim is due to Berge's theorem of the maximum.
\end{proof}

\begin{proof}[Proof of Theorem~\ref{t:con}]
    The first part of the theorem follows directly from the fact that
    $ic\RR^X$ is a closed subspace. The proof is omitted.
    To prove the strict convexity of $w^*$, it suffices to show that $Tw$ is
    strictly convex for all $w \in cc\RR^X$ since $cc\RR^X$ is a closed subspace
    of $c\RR^X$. Pick any $x_1, x_2\in X$ with $x_1 < x_2$ and any $\lambda \in
    (0, 1)$. Let $x_\lambda = \lambda x_1 + (1-\lambda) x_2$. Pick any $w\in
    cc\RR^X$ and let $\pi_w \colon X \to A$ be such that $(Tw)(x) = L(x, \pi_w(x), w)$.
    It follows that
    \begin{align*}
        \lambda(Tw)(x_1) + (1 - \lambda)(Tw)(x_2)
        &= \lambda L(x_1, \pi_w(x_1), w) + (1 - \lambda)L(x_2, \pi_w(x_2), w)\\
        &> L(x_\lambda, \lambda \pi_w(x_1) + (1-\lambda) \pi_w(x_2), w)\\
        &\geq L(x_\lambda, \pi_w(x_\lambda), w) = (Tw)(x_\lambda),
    \end{align*}
    where the first inequality holds because $(x, a) \mapsto L(x, a, w)$ is
    strictly convex and the second inequality holds because $F_G$ is convex.
    Therefore, $w^*$ is strictly convex. Strict convexity of $L$ then
    implies that $\pi^*$ is single-valued.

    Since $\pi^*(x) \in
    \interior G(x)$ and $G$ is continuous, there exists an open neighborhood
    $D$ of $x$ such that $\pi^*(x) \in \interior G(y)$ for all $y \in D$.
    Define $W(y) := L(y, \pi^*(x), w^*)$ for all $y \in D$. Then $W(y) \geq
    w^*(y)$ for all $y\in D$ and $W(x) = w^*(x)$. Since $W$ is convex and
    differentiable on $D$, differentiability of $w^*$ and \eqref{eq:diff}
    then follow from \cite{benveniste1979differentiability}.
\end{proof}

We say that a dynamic programming problem has the \emph{monotone increasing}
property if $-\infty < \phi(x) \leq L(x, a, \phi)$ for all $(x, a) \in F_G$
and Assumption~\ref{a:contbeta} are satisfied. We state two
useful lemmas from \cite{bertsekas2013abstract}.

\begin{lemma}[Proposition 4.3.14, \cite{bertsekas2013abstract}]
    \label{l:MI}
    Let the monotone increasing property hold and assume that the sets
    \begin{equation*}
        G_k(x, \lambda) := \{x\in G(x) \mid L(x, a, T^k\phi) \leq \lambda\}
    \end{equation*}
    are compact for all $x\in X$, $\lambda \in \RR$, and $k$ greater than
    some integer $\bar{k}$. If $w\in \RR_+^X$ satisfies $\phi \leq w \leq
    \bar{w}$, then $\lim_{n\to\infty} T^n w = \bar{w}$. Furthermore, there
    exists an optimal stationary policy.
\end{lemma}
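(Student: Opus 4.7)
The plan follows the monotone-increasing template in Bertsekas' abstract DP theory, adapted to our loss-minimization setting. The key quantity is $w_\infty(x) := \lim_n (T^n\phi)(x)$, which I will show equals $\bar{w}$ and is attained by a stationary policy. The monotone increasing hypothesis $\phi(x) \leq L(x,a,\phi)$ on $F_G$ gives $\phi \leq T\phi$, and isotonicity of $T$ from $A_2$ propagates this to $T^n\phi \leq T^{n+1}\phi$, so the pointwise limit $w_\infty$ exists. For an upper bound, the inequality $T \leq T_\pi$ (since an infimum is dominated by any specific choice) iterates to $T^n\phi \leq T_{\pi_0}\cdots T_{\pi_{n-1}}\phi$ for any admissible $\mu=(\pi_0,\pi_1,\ldots)\in\mM$; taking $\limsup$ in $n$ and then infimum over $\mu$ yields $w_\infty \leq \bar{w}$.

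The heart of the argument is showing $Tw_\infty = w_\infty$ using the compactness hypothesis on $G_k(x,\lambda)$. The direction $w_\infty \leq Tw_\infty$ is immediate from isotonicity applied to $T^n\phi \leq w_\infty$. For the reverse, I would fix $x$ and pick near-minimizers $a_n \in G(x)$ of $a\mapsto L(x,a,T^n\phi)$ for each $n\geq\bar k$. Since $L(x,a_n,T^n\phi) = (T^{n+1}\phi)(x) \leq w_\infty(x)$, the sequence $\{a_n\}$ eventually lies in $G_n(x,\lambda)$ for some finite $\lambda$ and admits a subsequence converging to some $a^*\in G(x)$. Joint continuity of $L$ in $(x,a)$ from $A_1$ together with continuity from below in the continuation value from Assumption~\ref{a:contbeta}(1) let me pass to the limit along this subsequence, concluding $(Tw_\infty)(x) \leq L(x,a^*,w_\infty) \leq w_\infty(x)$. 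The same compactness shows the infimum defining $Tw_\infty$ is attained, so a selector $\pi^*(x)\in\argmin_{a\in G(x)} L(x,a,w_\infty)$ exists and satisfies $T_{\pi^*} w_\infty = w_\infty$.

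Third, I would identify $w_\infty$ with $\bar w$ and verify optimality of $\pi^*$. Since $\phi \leq w_\infty$, isotonicity of $T_{\pi^*}$ gives $T_{\pi^*}^n\phi \leq w_\infty$ for all $n$; the iterates are monotone nondecreasing by the monotone increasing property, and Assumption~\ref{a:contbeta}(1) permits passage to the limit to obtain $w_{\pi^*} \leq w_\infty$. Combined with $\bar{w} \leq w_{\pi^*}$ (from the definition of $\bar{w}$) and $w_\infty \leq \bar w$ from the first step, the three values collapse: $w_{\pi^*} = \bar{w} = w_\infty$, giving optimality of $\pi^*$ and $T^n\phi \to \bar{w}$. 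For arbitrary $w$ with $\phi \leq w \leq \bar w$, isotonicity sandwiches $T^n w$ between $T^n\phi \to \bar w$ and $T^n\bar w = \bar w$, delivering the full convergence claim.

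The main obstacle is the interchange of infimum and limit in the fixed-point step. The compactness of the level sets $G_k(x,\lambda)$ is precisely what prevents near-optimal actions from escaping to infinity; without it, $w_\infty$ would only be a subsolution ($w_\infty \leq Tw_\infty$), and the chain of inequalities identifying $w_\infty$ with $\bar{w}$ would break. This is the classical divide between monotone-increasing and monotone-decreasing abstract DP: the latter yields the fixed-point identification without compactness, whereas the former genuinely requires it.
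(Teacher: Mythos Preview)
The paper does not actually prove this lemma; it is quoted verbatim from \cite{bertsekas2013abstract} (Proposition~4.3.14) and used as a black box in the proof of Theorem~\ref{t:opt}. Your outline is a faithful reconstruction of the standard monotone-increasing argument from that reference and is essentially correct.

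Two small points deserve tightening. First, in the compactness step you say that $a_n \in G_n(x,\lambda)$ and then extract a convergent subsequence, but these level sets are indexed by $n$; to trap the whole tail in a single compact set you need the observation that, by $A_2$ and $T^k\phi \uparrow$, the sets are nested, $G_{k+1}(x,\lambda) \subset G_k(x,\lambda)$, so for $n \geq \bar k$ every $a_n$ lies in the fixed compact $G_{\bar k}(x,\lambda)$. Second, the limit passage $L(x,a_{n_j},T^{n_j}\phi) \to L(x,a^*,w_\infty)$ cannot be done in one stroke, since $A_1$ gives continuity in $a$ only for a fixed continuous $w$, while Assumption~\ref{a:contbeta}(1) gives continuity from below only for fixed $(x,a)$; the clean way is to freeze $m \leq n_j$, use $A_2$ and $A_1$ to obtain $L(x,a^*,T^m\phi) \leq w_\infty(x)$, and then send $m \to \infty$. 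With these refinements your argument goes through.
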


\begin{lemma}[Proposition 4.3.9, \cite{bertsekas2013abstract}]
    \label{l:opt}
    Under the monotone increasing property, a stationary policy $\pi$ is
    optimal if and only if $T_\pi \bar{w} = T\bar{w}$.
\end{lemma}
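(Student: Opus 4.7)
The plan is to prove both directions of the equivalence directly, using that $\bar w$ is a fixed point of the Bellman operator $T$ together with the monotonicity and continuity properties of $T_\pi$ implied by Assumption \ref{a:contbeta}.

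First I would establish the supporting fact that $T\bar w = \bar w$. This is Bellman's principle under the monotone increasing property and can be extracted from Lemma \ref{l:MI}: taking $w = \phi$ yields $T^n\phi \uparrow \bar w$, and the continuity condition in Assumption \ref{a:contbeta}(1) applied inside each infimum (together with monotonicity of $\{T^n\phi\}$, which is monotone because $\phi \leq T\phi$ by the monotone increasing property) passes to the limit to give $T\bar w = \bar w$. A parallel argument shows that, for any fixed $\pi \in \Pi$, the sequence $\{T_\pi^n \phi\}$ is monotone increasing, so $w_\pi = \lim_n T_\pi^n\phi$ exists and satisfies $T_\pi w_\pi = w_\pi$ (again by Assumption \ref{a:contbeta}(1), now without an infimum to worry about).

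For the forward direction, I would assume $\pi$ is optimal, so $w_\pi = \bar w$. Then $\bar w = T_\pi w_\pi = T_\pi \bar w$. Combined with $\bar w = T\bar w$, this yields $T_\pi\bar w = T\bar w$, which is the claimed identity. For the reverse direction, assume $T_\pi \bar w = T\bar w$. Combining with $T\bar w = \bar w$ gives that $\bar w$ is also a fixed point of $T_\pi$, so by induction $T_\pi^n \bar w = \bar w$ for every $n$. Since $\phi \leq \bar w$ and $T_\pi$ is monotone (a direct consequence of $A_2$ specialized to $a = \pi(x)$), we get $T_\pi^n \phi \leq T_\pi^n \bar w = \bar w$ for all $n$. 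Passing to the limit yields $w_\pi \leq \bar w$. The reverse inequality $w_\pi \geq \bar w$ is immediate from the definition of $\bar w$ as the infimum over $\mM$ (which contains the stationary policy $\pi$). Hence $w_\pi = \bar w$, i.e., $\pi$ is optimal.

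The main obstacle is the justification of $T\bar w = \bar w$ and of the limit identity $T_\pi w_\pi = w_\pi$: both require interchanging a supremum/infimum with a monotone pointwise limit, which is exactly where Assumption \ref{a:contbeta}(1) and the monotone increasing property earn their keep. Once those two fixed-point identities are in hand, each direction of the equivalence reduces to a one-line manipulation, so the bulk of the real work is in the careful preparatory step rather than in the equivalence argument itself.
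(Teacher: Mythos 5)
The paper does not give a proof of this lemma: it is quoted verbatim from Bertsekas (2013), Proposition 4.3.9, and used as a black box in the proof of Theorem~\ref{t:opt}. So there is no ``paper's own proof'' to compare against; what follows is an evaluation of your argument on its merits.

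The overall architecture of your proof is the right one, and most of it is sound. Establishing $T_\pi w_\pi = w_\pi$ from the monotonicity $T_\pi^n\phi \uparrow w_\pi$ (guaranteed by $\phi \leq T_\pi\phi$ under the monotone increasing property) together with Assumption~\ref{a:contbeta}(1) is clean, precisely because no infimum intervenes. Given the two facts $T\bar w = \bar w$ and $T_\pi w_\pi = w_\pi$, your forward direction is a one-liner, and your reverse direction (induct to $T_\pi^n\bar w=\bar w$, sandwich $T_\pi^n\phi\leq T_\pi^n\bar w$, pass to the limit, use $w_\pi\geq\bar w$ from the definition of the infimum) is exactly how this is done in Bertsekas.

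The gap is in your justification of $T\bar w=\bar w$. You extract it from Lemma~\ref{l:MI}, but that lemma has a compactness hypothesis on the level sets $G_k(x,\lambda)$ that is not part of the monotone increasing property under which Lemma~\ref{l:opt} is stated, so you are proving a weaker result than claimed. More importantly, even granting $T^n\phi\uparrow\bar w$, the conclusion $T\bar w=\bar w$ does not follow from Assumption~\ref{a:contbeta}(1) alone in the way you describe. Monotonicity and $T^n\phi\leq\bar w$ give $T^{n+1}\phi\leq T\bar w$, hence $\bar w\leq T\bar w$ in the limit; but the reverse inequality $T\bar w\leq\bar w$ requires interchanging $\inf_{a\in G(x)}$ with the monotone pointwise limit $L(x,a,T^n\phi)\to L(x,a,\bar w)$, and that interchange is exactly what the compactness hypothesis in Lemma~\ref{l:MI} is there to secure. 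The right argument for $T\bar w\leq\bar w$ under the monotone increasing property alone is different: write $w_{(\pi_0,\mu')}=T_{\pi_0}w_{\mu'}$ (this identity follows from $T_{\pi_1}\cdots T_{\pi_n}\phi\uparrow w_{\mu'}$ and Assumption~\ref{a:contbeta}(1)), note that $w_{\mu'}\geq\bar w$ and $A_2$ give $T_{\pi_0}w_{\mu'}\geq T_{\pi_0}\bar w\geq T\bar w$, and take infima over $\pi_0$ and $\mu'$. That delivers Bellman's equation without any appeal to Lemma~\ref{l:MI} or compactness, which is the route Bertsekas takes (his Proposition~4.3.3). Replace your Lemma~\ref{l:MI} step with this argument and the rest of the proof stands.
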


\begin{proof}[Proof of Theorem~\ref{t:opt}]
    Theorem~\ref{t:bk0} implies that $\lim_{n\to\infty}
    T^n \phi = w^*$. To prove $w^* = \bar{w}$, it suffices to show that the
    conditions of Lemma~\ref{l:MI} hold and $\phi \leq \bar{w}$.

    It follows from $A_5$ that $\phi(x) \leq (T \phi)(x) \leq L(x, a, \phi)$ for
    all $(x, a) \in F_G$. Therefore, the monotone increasing property is
    satisfied. Since $T$ is a self-map on $c\RR^X$, to check the conditions of
    Lemma~\ref{l:MI}, it suffices to prove that the set
    \begin{equation*}
        G(x, \lambda) := \{x\in G(x) \mid L(x, a, w) \leq \lambda\}
    \end{equation*}
    is compact for any $w\in c\RR^X$, $x\in X$, and $\lambda \in \RR$. Since $a
    \mapsto L(x, a, w)$ is continuous by $A_1$, $L(x, \cdot\,,
    w)^{-1}\left((-\infty, \lambda]\right)$ is a closed set. Since $G$ is
    compact-valued, $G(x, \lambda)$ is compact. It remains to show that $\phi
    \leq \bar{w}$. By $A_2$ and the monotone increasing property, we have for any $\mu =
    (\pi_0, \pi_1, \ldots) \in \mM$, $\phi \leq T_{\pi_0}
    T_{\pi_1}\ldots T_{\pi_n} \phi$ for all $n\in\NN$. Then by definition,
    $\phi \leq w_\mu$ for all $\mu \in \mM$. Taking the infimum gives $\phi
    \leq \bar{w}$.
    Lemma~\ref{l:MI} then implies that $w^* = \bar{w}$
    and there exists an optimal stationary policy. The principle of
    optimality follows directly from Lemma~\ref{l:opt}.
\end{proof}

\subsection{Proofs for Section~\ref{s:model}}\label{ss:model-proofs}

Let $\fF$ be the set of increasing convex functions in $\iI$. Throughout the
proofs, we regularly use the alternative expression for $T$ given by
\begin{equation}
    \label{eq:bellop2}
    (Tw)(x) = \min_{0 \leq y \leq x} \, \{ \ell(x - y) + \beta w(y) \}.
\end{equation}
Also, given $w \in \fF$, define
\begin{equation*}
    \pi_w(x) = \argmin_{0 \leq a \leq x} \, \{ \ell(a) + \beta w(x-a) \}
\end{equation*}
and
\begin{equation}
    \label{eq:deftell}
    \sigma_w(x) := \argmin_{0 \leq y \leq x} \{\ell(x - y) + \beta w(y) \}
    = x - \pi_w(x).
\end{equation}
These functions are clearly well-defined, unique and single-valued. Let
$\sigma = \sigma_{w^*}$ and $\pi = \pi_{w^*}$. Let $\eta$ be
the constant defined by
\begin{equation}
    \label{eq:eta}
    \eta := \max \, \setntn{0 \leq x \leq \hat x}{ \ell'(x) \leq \beta
      \ell'(0)}.
\end{equation}

We begin with several lemmas. The proof of the first lemma is trivial and
hence omitted.

\begin{lemma}
    \label{l:eta}
    We have $\eta > 0$ if and only if $\ell'(0) > 0$.  If $\eta <
    \hat x$, then $\ell'(\eta) = \beta \ell'(0)$.
\end{lemma}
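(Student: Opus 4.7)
The plan is to handle the two claims separately, both following from elementary properties of $\ell'$ (continuous, strictly increasing by $\ell'' > 0$) together with the definition of $\eta$ as the maximum of a closed, bounded, nonempty subset of $[0, \hat x]$.

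For the first claim, I would argue by cases. If $\ell'(0) = 0$, then $\beta \ell'(0) = 0$, and strict convexity of $\ell$ gives $\ell'(x) > \ell'(0) = 0$ for every $x > 0$; hence the set $\{x \in [0,\hat x] : \ell'(x) \leq \beta \ell'(0)\}$ equals $\{0\}$, so $\eta = 0$. Conversely, if $\ell'(0) > 0$, then $\beta > 1$ gives $\ell'(0) < \beta \ell'(0)$, and continuity of $\ell'$ at $0$ produces some right-neighborhood on which $\ell'(x) \leq \beta \ell'(0)$, so $\eta > 0$.

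For the second claim, suppose $\eta < \hat x$. By maximality, $\ell'(x) > \beta \ell'(0)$ for all $x \in (\eta, \hat x]$; letting $x \downarrow \eta$ and using continuity gives $\ell'(\eta) \geq \beta \ell'(0)$. On the other hand, $\eta$ itself lies in the defining set, so $\ell'(\eta) \leq \beta \ell'(0)$. Combining yields $\ell'(\eta) = \beta \ell'(0)$.

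The only mildly delicate point is confirming that $\eta$ is well defined as a maximum rather than a supremum: the defining set is nonempty (it contains $0$), bounded, and closed by continuity of $\ell'$, so the maximum is attained. Beyond that, the argument is essentially a one-line application of strict monotonicity and continuity of $\ell'$, which is why the authors treat the lemma as immediate.
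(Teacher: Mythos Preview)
Your proof is correct and matches the paper's treatment: the authors declare the lemma trivial and omit the proof entirely, and your argument is exactly the kind of direct verification from continuity and strict monotonicity of $\ell'$ that they evidently have in mind.
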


\begin{lemma}
    \label{l:comp2}
    If $w \in \fF$, then $\sigma_w(x) = 0$ if and only if $x \leq \eta$.
\end{lemma}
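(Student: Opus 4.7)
My plan is to reduce the question to a first-order condition at the boundary $y=0$ of the (strictly convex) objective defining $\sigma_w$, and then exploit the bounds $\phi \leq w \leq \psi$ to pin down the one-sided derivative of $w$ at the origin.

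First I would collect what can be extracted from $w \in \fF \subset \iI$. Because $\phi(0) = \ell'(0)\cdot 0 = 0$ and $\psi(0) = \ell(0) = 0$, the bound $\phi \leq w \leq \psi$ forces $w(0)=0$. For $x>0$ close to $0$, dividing the inequality $\ell'(0)\, x \leq w(x) \leq \ell(x)$ by $x$ and letting $x \downarrow 0$ (using $\ell(0)=0$, so $\ell(x)/x \to \ell'(0)$) gives $w(x)/x \to \ell'(0)$. Since $w$ is convex with $w(0)=0$, this limit equals the right derivative $w'_+(0)$; hence $w'_+(0)=\ell'(0)$. This is the key input, and I expect it to be the only nontrivial step.

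Next, fix $x \in (0,\hat x]$ and set $h(y) := \ell(x-y) + \beta w(y)$ for $y \in [0,x]$. Strict convexity of $\ell$ plus convexity of $w$ makes $h$ strictly convex, so its minimizer is unique and $\sigma_w(x)$ coincides with this minimizer. Since $h$ is convex on a closed interval, $y=0$ is the minimizer if and only if the right derivative at $0$ is nonnegative, i.e.
\begin{equation*}
    h'_+(0) = -\ell'(x) + \beta\, w'_+(0) = -\ell'(x) + \beta \ell'(0) \geq 0,
\end{equation*}
which is equivalent to $\ell'(x) \leq \beta \ell'(0)$.

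Finally, because $\ell''>0$ the derivative $\ell'$ is strictly increasing, so the set $\{x \in [0,\hat x] : \ell'(x) \leq \beta \ell'(0)\}$ is exactly $[0,\eta]$ by definition of $\eta$ in \eqref{eq:eta}. Thus $\sigma_w(x)=0$ iff $x \leq \eta$, as claimed. The argument for $x=0$ is trivial since then $[0,x]=\{0\}$.
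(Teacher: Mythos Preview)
Your argument is correct. The main difference from the paper's proof is organizational rather than substantive, but it is a genuine and cleaner reorganization. The paper treats the two directions separately: for $x \leq \eta \Rightarrow \sigma_w(x)=0$ it argues by contradiction, using only the lower bound $w \geq \phi$ together with convexity of $\ell$ to rule out any $y>0$ improving on $y=0$; for the converse it uses the upper bound $w \leq \psi = \ell$ to obtain $\ell(x) \leq \ell(x-y) + \beta \ell(y)$, divides by $y$, and lets $y \downarrow 0$. You instead extract the single piece of information both directions need, namely $w'_+(0)=\ell'(0)$ (squeezed from $\phi \leq w \leq \psi$), and then reduce the whole lemma to the boundary first-order condition $h'_+(0) = -\ell'(x) + \beta \ell'(0) \geq 0$ for the strictly convex objective $h$. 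This unifies the two implications into one line and makes transparent why $\eta$, defined via $\ell'(x) \leq \beta \ell'(0)$, is exactly the threshold. The paper's version has the mild advantage of never invoking one-sided derivatives of $w$, but yours is shorter and more conceptual.
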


\begin{proof}
    First suppose that $x \leq \eta$. Seeking a contradiction, suppose there
    exists a $y \in (0,x]$ such that $\ell(x - y) + \beta w(y) < \ell(x)$.
    Since $w \in \fF$ we have $w(y) \geq \ell'(0) y$ and hence
    \begin{equation*}
        \beta w(y) \geq \beta \ell'(0) y \geq \ell'(\eta) y.         
    \end{equation*}
    Since $x \leq \eta$, this implies that $\beta w(y) \geq \ell'(x) y$.
    Combining these inequalities gives $\ell(x - y) + \ell'(x) y < \ell(x)$,
    contradicting convexity of $\ell$.

    Now suppose that $\sigma_w(x)=0$. We claim that $x \leq \eta$, or,
    equivalently $\ell'(x) \leq \beta \ell'(0)$. To prove $\ell'(x) \leq
    \beta \ell'(0)$, observe that since $w \in \fF$ we have $w(y) \leq
    \ell(y)$, and hence
    \begin{equation*}
       \ell(x) 
        \leq \ell(x-y) + \beta w(y) 
        \leq \ell(x-y) + \beta \ell(y)
        \quad \text{for all } y \leq x.
    \end{equation*}
    It follows that
    \begin{equation*}
        \frac{\ell(x) - \ell(x - y)}{y} \leq \frac{\beta \ell(y)}{y}
        \quad \text{for all} \quad
        y \leq x.
    \end{equation*}
    Taking the limit gives $\ell'(x) \leq \beta \ell'(0)$.
\end{proof}

\begin{proof}[Proof of Proposition~\ref{p:ndp}]
	Let $A = X = [0, \hat x]$, $G(x) = [0,x]$ and $L(x, a, w) = \ell(a) +
    \beta w(x-a)$. Conditions $A_1$--$A_3$ in Section~\ref{sss:fp} obviously
    hold. Condition $A_4$ holds since $\min_{0 \leq a \leq x} \{ \ell(a) +
    \beta \ell(x-a) \} \leq \ell(x)$. For condition $A_5$, note that $L(x,
    a, \phi) = \ell(a) + \beta \ell'(0) (x-a)$. Then $T\phi = \ell$ if $x <
    \eta$ and $(T\phi)(x) = \ell(\eta) + \beta \ell'(0)(x - \eta)$ if $x
    \geq \eta$. For $x < \eta$, $T\phi - \phi = \psi - \phi$ so we can
    choose any $\epsilon \leq 1$. For $x \geq \eta$,
	\begin{align*}
	(T\phi)(x) - \phi(x) 
	&= \ell(\eta) + \beta \ell'(0) (x-\eta) - \ell'(0) x\\
	&= \ell(\eta) - \ell'(0)\eta + (\beta-1)\ell'(0)(x-\eta)\\
	&\geq \ell(\eta) - \ell'(0)\eta = (\psi - \phi)(\eta).
	\end{align*}
	Since $\psi - \phi$ is increasing, we can choose any $\epsilon \leq
    \bar{\epsilon}$ where $(\psi - \phi)(\eta) = \bar{\epsilon}(\psi -
    \phi)(\hat x)$. The first part of the proposition thus follows from
    Theorem~\ref{t:bk0}.

    Consider the alternative expression for $T$ in \eqref{eq:bellop2}. Since
    $\ell$ is strictly convex, $(x, y) \mapsto \ell(x - y) + \beta w(y)$ is
    strictly convex for all $w\in cc\RR^X$. Hence,
    part~1 of Assumption~\ref{a:condiff} holds. Evidently $Tw$ is strictly
    convex for all $w\in\fF$.

    Next we show that $Tw$ is strictly increasing for all $w \in \fF$. Pick
    any $w\in \fF$ and $x_1 \leq x_2$. For ease of notation, let $y_i =
    \sigma_w(x_i)$ for $i\in \{1, 2\}$. If $y_2 \leq x_1$, then
    \begin{align*}
        (Tw)(x_1) &= \ell(x_1 - y_1) + \beta w(y_1)\\
        &\leq \ell(x_1 - y_2) + \beta w(y_2)\\
        &< \ell(x_2 - y_2) + \beta w(y_2) = (Tw)(x_2),
    \end{align*}
    where the first inequality holds since $y_2$ is available when $y_1$ is
    chosen and the second inequality holds since $\ell$ is strictly
    increasing. If $y_2 > x_1$, we first consider the case of $x_1 + y_2 <
    x_2$. Then $(Tw)(x_2) > \ell(x_1) + \beta w(y_2) \geq \ell(x_1) \geq
    (Tw)(x_1)$. For the case of $x_1 + y_2 \geq x_2$, we have $0 \leq y_1'
    \leq x_1 < y_2$ where $y_1' = x_1 + y_2 - x_2$. Since $w$ is not
    constant, $w \in \fF$ implies that $w$ is strictly increasing. It
    follows that
    \begin{align*}
        (Tw)(x_1) &= \ell(x_1 - y_1) + \beta w(y_1)\\
        &\leq \ell(x_1 - y_1') + \beta w(y_1')\\
        &< \ell(x_2 - y_2) + \beta w(y_2) = (Tw)(x_2).
    \end{align*}
    Therefore, $T$ is a self-map on $\fF$ and $Tw$ is strictly increasing
    and strictly convex for all $w\in\fF$. Theorem~\ref{t:con} then implies
    that $w^*$ is strictly increasing and strictly convex.
    
    Since $\ell$ is differentiable, part~2 of Assumption~\ref{a:condiff}
    holds. Theorem~\ref{t:con} then implies that $w^*$ is differentiable and
    $(w^*)'(x) = \ell'(x - \sigma(x))$ whenever $\sigma(x)$ is interior.
    Lemma~\ref{l:comp2} implies that $w^*(x) = \ell(x)$ and thus $(w^*)'(x)
    = \ell'(x)$ when $x \leq \eta$; when $x > \eta$, $\sigma$ is interior
    and $(w^*)'(x) = \ell'(x - \sigma(x))$. Since $\sigma$ is continuous,
    $(w^*)'$ is continuous. Therefore, $w^*$ is continuously differentiable
    on $(0, \hat x)$ and $(w^*)'(x) = \ell'(\pi(x))$.
\end{proof}

The next lemma further characterizes $\pi$ and $\sigma$.

\begin{lemma}
    \label{l:tstarnz}
    Let $w \in \fF$. If $x_1, x_2$ satisfy $0 < x_1 \leq x_2$, then
    $\sigma_w(x_1) \leq \sigma_w(x_2)$ and $\pi_w(x_1) \leq \pi_w(x_2)$.
    Moreover, if $x \geq \eta$, then $\pi_w(x) \geq \eta$; if $x \leq \eta$,
    then $\pi_w(x) = x$.
\end{lemma}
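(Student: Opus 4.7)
The plan is to prove the two monotonicity assertions by a simple exchange argument using convexity of $\ell$ and of $w$, and then read off the pointwise claims directly from Lemma~\ref{l:comp2} together with the monotonicity just obtained.

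For $\sigma_w$, I would argue by contradiction: suppose $x_1 < x_2$ but $y_1 := \sigma_w(x_1) > \sigma_w(x_2) =: y_2$. Since $y_2 < y_1 \leq x_1 \leq x_2$, both $y_1$ and $y_2$ are feasible at both states. Adding the two optimality inequalities $F(x_i, y_i) \leq F(x_i, y_j)$ for $F(x,y) := \ell(x-y) + \beta w(y)$ cancels the $w$ terms and yields
\begin{equation*}
    \ell(x_2 - y_2) - \ell(x_2 - y_1) \leq \ell(x_1 - y_2) - \ell(x_1 - y_1).
\end{equation*}
With $p := x_1 - y_1$, $q := x_1 - y_2$ and $h := x_2 - x_1 > 0$, this reads $\ell(q+h) - \ell(p+h) \leq \ell(q) - \ell(p)$ with $p < q$, which contradicts the strict monotonicity of divided differences of the strictly convex $\ell$.

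The argument for $\pi_w$ is analogous, applied to $H(x,a) := \ell(a) + \beta w(x - a)$. Because $\ell$ is strictly convex and $w$ is convex, $a \mapsto H(x,a)$ is strictly convex, so $\pi_w$ is single-valued. If $a_1 := \pi_w(x_1) > \pi_w(x_2) =: a_2$ with $x_1 < x_2$, uniqueness upgrades both optimality comparisons to strict inequalities; summing them cancels the $\ell$ terms to give
\begin{equation*}
    w(x_2 - a_2) - w(x_2 - a_1) < w(x_1 - a_2) - w(x_1 - a_1),
\end{equation*}
which contradicts the (weak) monotonicity of divided differences of the convex function $w$ under equal rightward shifts.

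With monotonicity in hand, Lemma~\ref{l:comp2} finishes everything: it asserts $\sigma_w(x) = 0$ iff $x \leq \eta$, so for such $x$, $\pi_w(x) = x - \sigma_w(x) = x$, proving the third claim. In particular $\pi_w(\eta) = \eta$, and the monotonicity of $\pi_w$ then gives $\pi_w(x) \geq \pi_w(\eta) = \eta$ for all $x \geq \eta$, which is the second claim. No step is a genuine obstacle; the only mild subtlety is ensuring the optimality inequalities in the $\pi_w$ step can be made strict, which is handled by strict convexity of $H(x, \cdot)$.
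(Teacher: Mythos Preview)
Your argument is correct. The exchange (interchange) argument you give for each of $\sigma_w$ and $\pi_w$ is the standard way to verify monotone comparative statics by hand, and the feasibility checks ($y_2 < y_1 \leq x_1 \leq x_2$ and $a_2 < a_1 \leq x_1 \leq x_2$) go through exactly as you say. Your treatment of the strict/weak inequality balance is also right: strict convexity of $\ell$ carries the contradiction in the $\sigma_w$ step, while uniqueness of the minimizer (from strict convexity of $H(x,\cdot)$) supplies the strictness needed in the $\pi_w$ step so that only weak convexity of $w$ is required. The final two claims are then read off from Lemma~\ref{l:comp2} and the monotonicity of $\pi_w$, just as in the paper.

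The paper's own proof is more terse: it observes that $(x,a)\mapsto \ell(a)+\beta w(x-a)$ and $(x,y)\mapsto \ell(x-y)+\beta w(y)$ satisfy the single crossing property (by convexity of $\ell$ and $w$) and then invokes Theorem~$4'$ of \cite{milgrom1994monotone} to conclude that $\pi_w$ and $\sigma_w$ are increasing. Your exchange argument is precisely the elementary content of that citation specialized to this convex setting, so the two approaches are not really different in spirit; yours is self-contained and avoids the external reference, while the paper's is shorter. The derivation of the last two claims from Lemma~\ref{l:comp2} is identical in both.
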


\begin{proof}
    Pick any $w \in \fF$. Since $\ell$ and $w$ are convex, the maps $(x, a)
    \mapsto \ell(a) + \beta w(x-a)$ and $(x, y) \mapsto \ell(x-y) + \beta
    w(y)$ both satisfy the single crossing property. It follows from
    Theorem~$4'$ of \cite{milgrom1994monotone} that $\pi_w$ and $\sigma_w$
    are increasing.
    
    For the last claim, since $\pi_w$ is increasing, Lemma~\ref{l:comp2}
    implies that, if $\eta \leq x$, then $\pi_w(x) \geq \pi_w(\eta) = \eta -
    \sigma_w(\eta) = \eta$; and if $x \leq \eta$, then $\pi_w(x) = x -
    \sigma_w(x) = x$.
\end{proof}

The following lemma characterizes the solution to \eqref{eq:dpm} and is
useful when showing the equivalence between \eqref{eq:dpm} and
\eqref{eq:epe}.

\begin{lemma}
    \label{l:finn0}
    If $\{a_t\}$ is a solution to \eqref{eq:dpm}, then 
        $\{a_t\}$ is monotone decreasing and $a_{T+1} = 0$ if and only if $a_T \leq \eta$.
\end{lemma}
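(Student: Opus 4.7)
The plan is to derive both claims from a single pair-perturbation argument applied to consecutive coordinates of the optimum. Fix any $t$ and set $s := a_t + a_{t+1}$. Holding the remaining coordinates fixed preserves both nonnegativity and the constraint $\sum_t a_t = \hat{x}$, so $a_t$ must minimize
\begin{equation*}
    h(a) := \beta^t \ell(a) + \beta^{t+1} \ell(s - a), \qquad a \in [0, s].
\end{equation*}
Since $\ell$ is strictly convex and continuously differentiable, so is $h$, with derivative $h'(a) = \beta^t[\ell'(a) - \beta \ell'(s - a)]$. Thus $a_t$ is the unique minimizer of $h$ on $[0,s]$, and is characterized by the standard KKT conditions at exactly one of three possible locations.

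For the monotonicity claim, I would inspect each of these three cases. An interior optimum $a_t \in (0,s)$ satisfies the FOC $\ell'(a_t) = \beta \ell'(a_{t+1})$; since $\beta > 1$ and $\ell'(a_{t+1}) > 0$, this forces $\ell'(a_t) > \ell'(a_{t+1})$, hence $a_t > a_{t+1}$ by strict monotonicity of $\ell'$. The boundary $a_t = s$ gives $a_{t+1} = 0$, so $a_t \geq a_{t+1}$ trivially. The remaining boundary $a_t = 0$ with $s > 0$ would require $h'(0) \geq 0$, i.e., $\ell'(0) \geq \beta \ell'(s) \geq \beta \ell'(0) > \ell'(0)$, which is impossible; and if $s = 0$ then $a_t = a_{t+1} = 0$. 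In every case $a_t \geq a_{t+1}$, so $\{a_t\}$ is nonincreasing.

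For the equivalence, I would apply the same analysis to the pair $(a_T, a_{T+1})$. If $a_{T+1} = 0$, the boundary KKT condition $h'(s) \leq 0$ gives $\ell'(a_T) \leq \beta \ell'(0)$, which by the definition of $\eta$ and strict monotonicity of $\ell'$ (together with Lemma~\ref{l:eta}) is equivalent to $a_T \leq \eta$. Conversely, if $a_T \leq \eta$ but $a_{T+1} > 0$, then monotonicity rules out $a_T = 0 < a_{T+1}$, so the pair minimizer is interior and the FOC combined with $a_{T+1} > 0$ yields $\ell'(a_T) = \beta \ell'(a_{T+1}) > \beta \ell'(0)$, whence $a_T > \eta$, a contradiction. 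The main subtlety I anticipate is the careful bookkeeping of the KKT boundary cases, especially ruling out the spurious corner $a_t = 0$ with $s > 0$, which is cleanly eliminated by combining $\beta > 1$ with the strict positivity and strict monotonicity of $\ell'$; the remainder reduces to routine use of strict convexity and the definition of $\eta$.
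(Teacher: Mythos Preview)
Your proof is correct and uses essentially the same approach as the paper: exploiting optimality of each consecutive pair $(a_t,a_{t+1})$ with the remaining coordinates held fixed. The paper executes this via a swap argument for monotonicity and explicit $\epsilon$-perturbations with limiting difference quotients for the equivalence, whereas you recast the pair subproblem as a one-dimensional strictly convex minimization and read off the KKT conditions directly---a tidier route to the same conclusions.
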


\begin{proof}
    The first claim is obvious, because if $\{a_t\}$ is a solution to
    \eqref{eq:dpm} with $a_t < a_{t+1}$, then, given that $\beta >
    1$, swapping the values of these two points in the sequence
    will preserve the constraint while strictly decreasing total loss.
    Regarding the second claim, since $\{a_t\}$
    is monotone decreasing, it suffices to check the case $a_T > 0$.  To this
    end, suppose to the contrary that $\{a_t\}$ is a solution to
    \eqref{eq:dpm} with $0 < a_T < \eta$ and $a_{T+1} > 0$.  Consider an
    alternative feasible sequence $\{\hat a_t\}$ defined by $\hat a_T = a_T +
    \epsilon$, $\hat a_{T+1} = a_{T+1} - \epsilon$ and $\hat a_t = a_t$ for
    other $t$.  If we compare the values of these two sequences we get
    \begin{align*}
        \sum_{t=0}^{\infty} \beta^t \ell(a_t) - \sum_{t=0}^{\infty} \beta^t
        \ell(\hat a_t)
        & = \beta^T [\ell(a_T) - \ell(a_T + \epsilon)]
            + \beta^{T+1} [\ell(a_{T+1}) - \ell(a_{T+1} - \epsilon)]
            \\
        & = \epsilon \beta^T 
            \left\{ 
            - \frac{\ell(a_T + \epsilon) - \ell(a_T)}{\epsilon}
            + \beta \frac{\ell(a_{T+1} - \epsilon) - \ell(a_{T+1})}{-\epsilon}
            \right\}.
    \end{align*}
    The term inside the parenthesis converges to 
    \begin{equation*}
        - \ell'(a_T) + \beta \ell'(a_{T+1}) > - \ell'(\eta) + \beta \ell'(0) \geq 0,
    \end{equation*}
    where the first inequality follows from $a_T \leq \eta$, $a_{T+1} > 0$ and
    strict convexity of $\ell$; and the second inequality is by the definition
    of $\eta$.  We conclude that for $\epsilon$ sufficiently small, the
    difference $\sum_{t=0}^{\infty} \beta^t \ell(a_t) - \sum_{t=0}^{\infty}
    \beta^t\ell(\hat a_t)$ is positive, contradicting optimality.

    Finally we check the claim $a_{T+1} = 0 \implies a_T \leq \eta$.
    Note that if $\eta = \hat x$ then there is nothing to prove, so we can and do
    take $\eta < \hat x$.  Seeking a contradiction, suppose instead that $a_{T+1} = 0$ and
    $a_T > \eta$.  Consider an alternative feasible sequence $\{\hat a_t\}$
    defined by $\hat a_T = a_T - \epsilon$, $\hat a_{T+1} = \epsilon$ and
    $\hat a_t = a_t$ for other $t$.  In this case we have
    \begin{equation*}
        \sum_{t=0}^{\infty} \beta^t \ell(a_t) - \sum_{t=0}^{\infty} \beta^t
        \ell(\hat a_t)
        = \epsilon \beta^T 
            \left\{ 
                \frac{\ell(a_T - \epsilon) - \ell(a_T)}{-\epsilon}
                - \beta \frac{\ell(\epsilon) - \ell(0)}{\epsilon}
            \right\}.
    \end{equation*}
    The term inside the parentheses converges to 
    \begin{equation*}
        \ell'(a_T) - \beta \ell'(0) > \ell'(\eta) - \beta \ell'(0) = 0,
    \end{equation*}
    where the final equality is due to $\eta < \hat x$ and Lemma~\ref{l:eta}.
    Once again we conclude that for $\epsilon$ sufficiently small, the
    difference $\sum_{t=0}^{\infty} \beta^t \ell(a_t) - \sum_{t=0}^{\infty}
    \beta^t\ell(\hat a_t)$ is positive, contradicting optimality.
\end{proof}

\begin{proof}[Proof of Proposition~\ref{p:policy}]
    To show the equivalence between \eqref{eq:dpm} and \eqref{eq:epe}, we
    first show that \eqref{eq:dpm} is equivalent to $\bar{w} = \inf_{\mu \in
      \mM} w_\mu$ where $w_\mu$ is as defined in \eqref{eq:piv}. Suppose
    that the optimal policy is
    $\mu = (\pi_0, \pi_1, \ldots)$ and we let $\sigma_t(x) = x - \pi_t(x)$.
    Then we have
    \begin{multline}
        \label{eq:fmu}
        \bar{w}(\hat x) = w_\mu(\hat x) = \ell[\pi_0(\hat x)] + \beta
        \ell[\pi_1\sigma_0(\hat x)] +
        \beta^2 \ell[\pi_2\sigma_1\sigma_0(\hat x)] + \ldots \\
        + \limsup_{t\to \infty} \beta^k
        \ell'(0)\sigma_{t-1}\sigma_{t-2}\cdots \sigma_0(\hat x).
    \end{multline}
    It is clear that $\bar{w}$ is finite. Therefore, the optimal policy must
    satisfy $\sigma_t \to 0$, otherwise the last term in \eqref{eq:fmu}
    would go to infinity. Let $a_t = \pi_t\sigma_{t-1}\ldots
    \sigma_0(\hat x)$. We claim that $\{a_t\}$ solves \eqref{eq:dpm}.
    Suppose not and the solution to (\ref{eq:dpm}) is $\{a'_t\}$. Then by
    Lemma~\ref{l:finn0}, $a'_t = 0$ for all $t > T$ for some $T$. Thus we
    can construct a policy $\mu'$ that reproduces $\{a_t'\}$ and gives a
    lower loss. This is a contradiction. Conversely, suppose that the
    solution to \eqref{eq:dpm} is $\{a_t\}$. Using the same argument, we can
    show that the policy that gives rise to $\{a_t\}$ is an optimal policy.
    Therefore, $W = \bar{w}$.

    Next we show that $w^* = \bar{w}$ using Theorem~\ref{t:opt}. Both
    conditions in Assumption~\ref{a:contbeta} can be verified for $(L, G)$.
    Part 1 of Assumption~\ref{a:contbeta} is trivial in this setting, since
    $v_n \uparrow v$ pointwise clearly implies $\ell(a) + \beta v_n(x-a) \to
    \ell(a) + \beta v(x-a)$ at each $(x, a) \in F_G$. Part~2 also holds,
    since for any $r > 0$ and $w \geq \phi$, we have
    \begin{equation*}
      L(x, a, w + r) = \ell(a) + \beta w(x-a) + \beta r = L(x, a, w) + \beta
      r.
    \end{equation*}
    Hence Theorem~\ref{t:opt} applies. It follows from Theorem~\ref{t:opt}
    that $w^* = \bar{w}$, there exists an stationary optimal policy, and the
    Bellman's principle of optimality holds. Since $\pi^*$ satisfies
    $T_{\pi^*} w^* = T w^*$, $\pi^*$ is a stationary optimal policy.
    
    Theorems~\ref{t:bk0} and \ref{t:con} imply that $\pi^*$ is continuous
    and single-valued. It then follows from the principle of optimality that
    $\{a_t^*\}$ is the unique solution to \eqref{eq:dpm}.
\end{proof}

\begin{proposition} \label{p:finite}
    For all $n\in\NN$ and increasing convex $w\in\iI$, we have
    \begin{equation*}
        T^n w(x) = w^*(x) \; \text{ whenever } \, x \leq n \eta.
    \end{equation*}
\end{proposition}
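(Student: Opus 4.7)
The plan is to proceed by induction on $n$, using that $T$ is a self-map on the class $\fF$ of increasing convex members of $\iI$ (as established in the proof of Proposition~\ref{p:ndp}), so that $T^n w \in \fF$ whenever $w$ is increasing convex in $\iI$. The base case $n = 0$ is immediate, since every $w \in \iI$ satisfies $w(0) = 0 = w^*(0)$.

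The key auxiliary observation is that for every $v \in \fF$ one has $\sigma_v(x) \leq n\eta$ whenever $x \leq (n+1)\eta$. If $x \leq \eta$, this is Lemma~\ref{l:comp2}, which gives $\sigma_v(x) = 0$. If $\eta < x \leq (n+1)\eta$, Lemma~\ref{l:tstarnz} yields $\pi_v(x) \geq \eta$, hence $\sigma_v(x) = x - \pi_v(x) \leq (n+1)\eta - \eta = n\eta$. Applied to $v = T^n w$ this controls the minimizer in the definition of $T^{n+1} w$; applied to $v = w^*$ it controls the optimal policy $\pi$.

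For the inductive step, fix $x \leq (n+1)\eta$ and assume $T^n w = w^*$ on $[0, n\eta]$. If $x \leq \eta$, Lemma~\ref{l:comp2} applied to $T^n w \in \fF$ gives $\sigma_{T^n w}(x) = 0$, so $(T^{n+1}w)(x) = \ell(x) + \beta T^n w(0) = \ell(x)$; the same reasoning for $w^*$ gives $w^*(x) = \ell(x)$, and equality holds. If $\eta < x \leq (n+1)\eta$, let $a^\sharp$ be the unique minimizer (which exists by strict convexity of $\ell$ and convexity of $T^n w$) in $(T^{n+1}w)(x) = \min_{0 \leq a \leq x}\{\ell(a) + \beta T^n w(x-a)\}$. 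The auxiliary observation applied to $T^n w$ gives $x - a^\sharp \leq n\eta$, so the inductive hypothesis yields
\begin{equation*}
(T^{n+1}w)(x) = \ell(a^\sharp) + \beta T^n w(x - a^\sharp) = \ell(a^\sharp) + \beta w^*(x - a^\sharp) \geq w^*(x).
\end{equation*}
For the reverse inequality, the auxiliary observation applied to $w^*$ yields $\sigma(x) \leq n\eta$, and the feasible choice $a = \pi(x)$ produces
\begin{equation*}
(T^{n+1}w)(x) \leq \ell(\pi(x)) + \beta T^n w(\sigma(x)) = \ell(\pi(x)) + \beta w^*(\sigma(x)) = w^*(x).
\end{equation*}

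The only substantive step is the auxiliary observation, and it only requires Lemmas~\ref{l:comp2} and~\ref{l:tstarnz}; no delicate estimates are involved. Conceptually, it encodes the fact that each optimal step consumes at least $\eta$ of the remaining mass once that mass exceeds $\eta$, so each additional application of $T$ extends the region on which iterates coincide with $w^*$ by exactly $\eta$.
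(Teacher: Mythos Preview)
Your proof is correct and follows essentially the same route as the paper: an induction on $n$ driven by the key auxiliary fact that $\sigma_v(x) \leq n\eta$ whenever $v \in \fF$ and $x \leq (n+1)\eta$, obtained from Lemmas~\ref{l:comp2} and~\ref{l:tstarnz}. The only cosmetic difference is that the paper phrases the induction as ``$T^k f = T^k g$ on $[0,k\eta]$ for any $f,g \in \fF$'' and restricts the minimization domain directly, whereas you compare $T^{n+1}w$ to $w^*$ via two inequalities; the substance is identical.
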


Proposition~\ref{p:finite} implies uniform convergence in \emph{finite}
time. In particular, for $n \geq \hat x / \eta$ we have $T^n w = w^*$
everywhere on $[0, \hat x]$. Note that this bound $\hat x / \eta$ is
independent of the initial condition $w$.

\begin{proof}[Proof of Proposition~\ref{p:finite}]
    It suffices to show that if $f, g \in \fF$, then $T^k f = T^k g$ on $[0,
    k \eta]$. We prove this by induction.

    To see that $T^1 f = T^1 g$ on $[0, \eta]$, pick any $x \in [0, \eta]$
    and recall from Lemma~\ref{l:comp2} that if $h \in \fF$ and $x \leq
    \eta$, then $Th(x) = \ell(x)$. Applying this result to both $f$ and $g$
    gives $Tf(x) = Tg(x) = \ell(x)$. Hence $T^1 f = T^1 g$ on $[0, \eta]$ as
    claimed.

    Turning to the induction step, suppose now that $T^k f = T^k g$ on $[0,
    k \eta]$, and pick any $x \in [0, (k+1)\eta]$. Let $h \in \fF$ be
    arbitrary, let $\pi_h$ be the $h$-greedy function, and let $\sigma_h(x)
    := x - \pi_h(x)$. By Lemma~\ref{l:tstarnz}, we have $\pi_h(x) \geq
    \eta$, and hence
    \begin{equation*}
        \sigma_h(x) \leq x - \eta \leq (k+1) \eta - \eta \leq k \eta.
    \end{equation*}
    In other words, given function $h$, the optimal choice at $x$ is less
    than $k \eta$. Since this is true for both $h = T^k f$ and $h = T^k g$,
    we have
    \begin{equation*}
        T^{k+1}f(x) 
        = \min_{0 \leq y \leq x} \{ \ell(x - y) + \beta T^k f(y) \} 
        = \min_{0 \leq y \leq k \eta} \{ \ell(x - y) + \beta T^k f(y) \}.
    \end{equation*}
    Using the induction step we can now write
    \begin{equation*}
        T^{k+1}f(x) 
         = \min_{0 \leq y \leq k \eta} \{ \ell(x - y) + \beta T^k g(y) \} 
         = \min_{0 \leq y \leq x} \{ \ell(x - y) + \beta T^k g(y) \} .
    \end{equation*}
    The last expression is just $T^{k+1} g(x)$, and we have now shown that
    $T^{k+1} f = T^{k+1} g$ on $[0, (k+1) \eta]$.  The proof is complete.
\end{proof}

\begin{proof}[Proof of Proposition~\ref{p:co22}]
    Since $\ell'(0) = 0$, \eqref{eq:EU} is equivalent to
    $\beta\ell'(a^*_{t+1}) = \ell'(a^*_t)$.

  \textbf{Sufficiency}. Let $x_0^* = \hat x$ and $x_t^* = x_{t-1}^* -
  a_{t-1}^*$ for $t\geq 1$. Let $\{a_t\}$ be any feasible sequence. Let $x_0
  = \hat x$ and $x_t = x_{t-1} - a_{t-1}$. It suffices to prove that
  \begin{equation*}
    D := \lim_{T \to\infty} \sum_{t=0}^T \beta^t [\ell(a_t^*) - \ell(a_t)]
    \leq 0.
  \end{equation*}
  Since $\ell$ is convex, we have
  \begin{equation*}
    D 
    = \lim_{T \to\infty} \sum_{t=0}^T \beta^t [\ell(x_t^* - x_{t+1}^*) -
    \ell(x_t - x_{t+1})]
    \leq \lim_{T \to\infty} \sum_{t=0}^T \beta^t \ell'(a_t^*) (x_t^* - x_t
    - x_{t+1}^* + x_{t+1}).
  \end{equation*}
  Since $x_0 = x_0^*$, rearranging gives
  \begin{equation*}
    D \leq \lim_{T \to\infty} \sum_{t=0}^T \beta^t (x_{t+1}^* -
    x_{t+1})[\beta \ell'(a_{t+1}^*) - \ell'(a_t^*)] - \beta^T
    \ell'(a_T^*)(x_{T+1}^* - x_{T+1}).
  \end{equation*}
  Since $\beta \ell'(a_{t+1}^*) = \ell'(a_t^*)$, the summation is zero and
  $\beta^T \ell'(a_T^*) = \ell'(a_0^*)$. We have
  \begin{equation*}
    D \leq -\lim_{T \to\infty} \ell'(a_0^*) (x_{T+1}^* - x_{T+1}).
  \end{equation*}
  Since $\{a_t\}$ and $\{a_t^*\}$ are feasible, $x_{T+1}$ and $x_{T+1}^*$ go
  to zero as $T\to\infty$. Hence $D \leq 0$.

  \textbf{Existence and Uniqueness.} Since $\{a_t^*\}$ is feasible and
  satisfies $\beta \ell'(a_{t+1}^*) = \ell'(a_t^*)$ for all $t$, we have
  \begin{equation*}
    \hat x = \sum_{t=0}^\infty a_t^* = \sum_{t=0}^\infty
    (\ell')^{-1}\left(\frac{1}{\beta^t} \ell'(a_0^*)\right) =: g(a_0^*),
  \end{equation*}
  where $(\ell')^{-1}$ is well defined on $[0, \lim_{x\to \infty} \ell'(x)]$
  because $\ell$ is increasing, strictly convex, and $\ell'(0) = 0$. Hence,
  $g$ is well defined on $\RR_+$ and $g(a_0^*)$ is continuous and strictly
  increasing in $a_0^*$. Since $g(0) = 0$ and $g(\hat x) > \hat x$, there exists a
  unique $a_0^*>0$ such that $\{a_t^*\}$ satisfying $\beta \ell'(a_{t+1}^*)
  = \ell'(a_t^*)$ is feasible, $a_t^* > 0$ for all $t$, and $\{a_t^*\}$ is
  strictly decreasing. That $\{a_t^*\}$ is an optimal solution then follows
  from the sufficiency part. Since $\ell$ is strictly convex, the solution
  is unique.

  \textbf{Necessity.} Since we have pinned down a unique solution of
  \eqref{eq:dpm} which satisfies $\beta \ell'(a_{t+1}^*) = \ell'(a_t^*)$,
  the condition is also necessary.
\end{proof}

\subsection{Proofs for Section~\ref{s:app:production}}

\begin{proof}[Proof of Proposition~\ref{p:oie}]
    We must verify that $(W, \{a_i^*\})$ satisfies Definition~\ref{d:eq}.
    We first consider the case of $\ell'(0) > 0$. By
    Propositions~\ref{p:ndp} and \ref{p:policy}, the value function $W$ is a
    solution to the Bellman equation \eqref{eq:epe}, and hence satisfies
    \begin{equation}
        \label{eq:sb}
        W(s) = \min_{0 \leq v \leq s} \, \{ c(v) + (1 + \tau) W(s-v) \}
        \quad \text{for all } s \in [0, 1],
    \end{equation}
    and $W$ lies in the class $\fF$ of increasing,
    convex and continuous functions $f \colon \RR_+ \to \RR_+$ such that
    $c'(0)s \leq f(s) \leq c(s)$ for all $s \in \RR_+$. In addition, with
    $\{x_i\}$ as the optimal state process (see Proposition~\ref{p:policy}),
    we have,
    \begin{equation}
        \label{eq:sbp}
        W(x_i) = \{ c(a^*_i) + (1 + \tau) W(x_{i+1}) \}
        \quad \text{for all } i \geq 0.
    \end{equation}
    We need to show that 1--3 of Definition~\ref{d:eq} hold when $p=W$ and
    $v_i = a_i^*$ for all $i\geq 0$. Part~1 is immediate because $W \in \fF$ and
    all functions in $\fF$ must have this property, while Part~2 follows
    directly from \eqref{eq:sb}. To see that Part~3 of Definition~\ref{d:eq}
    also holds, let $b_i = x_i$. By the definition of the state process, the
    sequence $\{b_i\}$ then corresponds to the downstream boundaries of a
    set of firms obeying task allocation $\{a_i^*\}$. The profits of firm
    $i$ are $\pi_i = W(b_i) - c(a^*_i) - (1 + \tau) W(b_{i+1})$. By
    \eqref{eq:sbp} and $b_i = x_i$, we have $\pi_i = 0$ for all $i$. Hence
    Part~3 of Definition~\ref{d:eq} also holds, as was to be shown.

    If $\ell'(0) = 0$, part 1 follows from the definition of the value
    function \eqref{eq:vf}. By Proposition~\ref{p:co22}, for any $t$ with $0\leq
    t \leq 1$, there exists a unique optimal allocation $\{a_{t, j}^*\}$
    such that $W(t) = \sum_j \beta^j \ell(a_{t, j}^*)$, and $\sum_j a_{t,
      j}^* = t$. Since $\{s-t, a_{t, 0}^*, a_{t, 1}^*, \ldots\}$ is a
    feasible allocation at stage $s$ with $t\leq s \leq 1$, part 2 follows
    from the definition of the value function. To see part 3, let $b_0 = 1$
    and $b_i = b_{i-1} - a_{i-1}^*$. By Proposition~\ref{p:co22}, we have
    $\ell'(a_i^*) = (1+\tau)\ell'(a_{i+1}^*)$. Since $\sum_{i=j}^\infty
    a_i^* = b_j$ for all $j$, it follows again from Proposition~\ref{p:co22}
    that $\{a_i^*\}_{i=j}^\infty$ is an optimal allocation for stage $b_j$.
    Therefore, $p(b_i) = \sum_{j=0}^\infty (1+\tau)^j c(a_{i+j}^*) =
    c(a_i^*) + (1+\tau)p(b_{i+1})$ for all $i$. Hence, $\pi_i = 0$ for all
    $i$.
\end{proof}

\subsection{Proofs for Section~\ref{s:app:networks}}\label{ss:networks-proofs}

\begin{proof}[Proof of Proposition~\ref{p:eq_mp}]
    To study this problem in the framework of Section~\ref{ss:gdp}, we set $X
    = [0, \hat x]$, $A = [0, \hat x] \times \NN$, $G(x) = [0, x] \times
    \NN$, and
    \begin{equation*}
        L(x, a, w) = c(x-t) + g(k) + (1+\tau)kp(t/k) \qquad a = (t, k).
    \end{equation*}
    Since $g(k) \to \infty$ as $k \to \infty$, we can restrict $G(x)$ to be
    $[0, x] \times \{1, 2, \ldots, \bar{k}\}$ so that $G$ is compact-valued.
    Under the conditions of Proposition~\ref{p:eq_mp}, it can be shown that
    $A_1$--$A_5$ hold with $\psi = c$ and $\phi(s) = c'(0)s$ (see
    \cite{yu2019equilibrium}). Then, Theorem~\ref{t:bk0} implies that the
    Bellman equation~\eqref{eq:mp} has a unique solution $p^*$ in $\iI$,
    $T^np \to p^*$ for all $p\in\iI$ where
	\begin{equation*}
	(Tp)(s) := \min_{\substack{0 \leq t\leq s\\ k\in\NN}} L(x, a, w),
	\end{equation*}
    and $t^*$ and $k^*$ exist. We need only verify that $(p^*, \{v_i\}, \{k_i\})$
    given by $v_i = b_i - t^*(b_i)$, $k_i = k^*(b_i)$ and $b_{i+1} = (b_i -
    v_i)/k_i$ is an equilibrium, the definition of which is given in
    Section~\ref{ss:network}.

	Since $p^*\in\iI$, $p(0) = 0$. Since $p^*$ satisfies \eqref{eq:mp}, part
    (ii) of the definition is also satisfied. To see that part (iii) holds,
    note that
	\begin{align*}
	p^*(b_i) &= c(b_i - t^*(b_i)) + g(k^*(b_i)) +
	(1+\tau) k^*(b_i) p^*\left(\frac{t^*(b_i)}{k^*(b_i)}\right)\\
	&= c(v_i) + g(k_i) + (1+\tau) k_i p^*
	\left(
	\frac{b_i - v_i}{k_i}
	\right).
	\end{align*}
	It follows that $\pi_i = 0$ for all $i\in\ZZ$ where $\pi_i$ is as
	defined in \eqref{eq:prof_mp}. This completes the proof.
\end{proof}

\bibliographystyle{ecta}

\bibliography{jet_bib}

\begin{thebibliography}{39}
\newcommand{\enquote}[1]{``#1''}
\expandafter\ifx\csname natexlab\endcsname\relax\def\natexlab#1{#1}\fi

\bibitem[\protect\citeauthoryear{Antr{\`a}s and De~Gortari}{Antr{\`a}s and
  De~Gortari}{2020}]{antras2020geography}
\textsc{Antr{\`a}s, P. and A.~De~Gortari} (2020): \enquote{On the geography of
  global value chains,} \emph{Econometrica}, 88, 1553--1598.

\bibitem[\protect\citeauthoryear{Baldwin and Venables}{Baldwin and
  Venables}{2013}]{baldwin2013spiders}
\textsc{Baldwin, R. and A.~J. Venables} (2013): \enquote{Spiders and snakes:
  offshoring and agglomeration in the global economy,} \emph{Journal of
  International Economics}, 90, 245--254.

\bibitem[\protect\citeauthoryear{Benveniste and Scheinkman}{Benveniste and
  Scheinkman}{1979}]{benveniste1979differentiability}
\textsc{Benveniste, L.~M. and J.~A. Scheinkman} (1979): \enquote{On the
  differentiability of the value function in dynamic models of economics,}
  \emph{Econometrica}, 47, 727--732.

\bibitem[\protect\citeauthoryear{Bertsekas}{Bertsekas}{2013}]{bertsekas2013abstract}
\textsc{Bertsekas, D.~P.} (2013): \emph{Abstract Dynamic Programming}, Athena
  Scientific Belmont, MA.

\bibitem[\protect\citeauthoryear{Bloise and Vailakis}{Bloise and
  Vailakis}{2018}]{bloise2018convex}
\textsc{Bloise, G. and Y.~Vailakis} (2018): \enquote{Convex dynamic programming
  with (bounded) recursive utility,} \emph{Journal of Economic Theory}, 173,
  118--141.

\bibitem[\protect\citeauthoryear{Boehm and Oberfield}{Boehm and
  Oberfield}{2020}]{boehm2020misallocation}
\textsc{Boehm, J. and E.~Oberfield} (2020): \enquote{Misallocation in the
  market for inputs: enforcement and the organization of production,} \emph{The
  Quarterly Journal of Economics}, 135, 2007--2058.

\bibitem[\protect\citeauthoryear{Christaller}{Christaller}{1933}]{christaller1933central}
\textsc{Christaller, W.} (1933): \emph{Central Places in Southern Germany},
  Translation into English by Carlisle W. Baskin in 1966, Englewood Cliffs, NJ:
  Prentice-Hall.

\bibitem[\protect\citeauthoryear{Coase}{Coase}{1937}]{coase1937nature}
\textsc{Coase, R.~H.} (1937): \enquote{The nature of the firm,}
  \emph{Economica}, 4, 386--405.

\bibitem[\protect\citeauthoryear{Coe and Yeung}{Coe and
  Yeung}{2015}]{coe2015global}
\textsc{Coe, N.~M. and H.~W.-C. Yeung} (2015): \emph{Global Production
  Networks: Theorizing Economic Development in an Interconnected World}, Oxford
  University Press.

\bibitem[\protect\citeauthoryear{Costinot, Vogel, and Wang}{Costinot
  et~al.}{2013}]{costinot2013elementary}
\textsc{Costinot, A., J.~Vogel, and S.~Wang} (2013): \enquote{An elementary
  theory of global supply chains,} \emph{The Review of Economic Studies}, 80,
  109--144.

\bibitem[\protect\citeauthoryear{Du}{Du}{1989}]{du1989fixed}
\textsc{Du, Y.} (1989): \enquote{Fixed points of a class of non-compact
  operators and applications,} \emph{Acta Mathematica Sinica}, 32, 618--627.

\bibitem[\protect\citeauthoryear{Epstein and Zin}{Epstein and
  Zin}{1989}]{epstein1989substitution}
\textsc{Epstein, L. and S.~Zin} (1989): \enquote{Substitution, risk aversion,
  and the temporal behavior of consumption and asset returns: a theoretical
  framework,} \emph{Econometrica}, 57, 937--69.

\bibitem[\protect\citeauthoryear{Fally and Hillberry}{Fally and
  Hillberry}{2018}]{fally2018coasian}
\textsc{Fally, T. and R.~Hillberry} (2018): \enquote{A Coasian model of
  international production chains,} \emph{Journal of International Economics},
  114, 299--315.

\bibitem[\protect\citeauthoryear{Farlow}{Farlow}{2020}]{bmk}
\textsc{Farlow, A.} (2020): \enquote{Beyond COVID-19: Supply Chain Resilience
  Holds Key to Recovery,} Tech. rep., Baker McKenzie.

\bibitem[\protect\citeauthoryear{Gabaix}{Gabaix}{2009}]{gabaix2009power}
\textsc{Gabaix, X.} (2009): \enquote{Power laws in economics and finance,}
  \emph{Annu. Rev. Econ.}, 1, 255--294.

\bibitem[\protect\citeauthoryear{Gabaix and Ioannides}{Gabaix and
  Ioannides}{2004}]{gabaix2004evolution}
\textsc{Gabaix, X. and Y.~M. Ioannides} (2004): \enquote{The evolution of city
  size distributions,} in \emph{Handbook of Regional and Urban Economics},
  Elsevier, vol.~4, 2341--2378.

\bibitem[\protect\citeauthoryear{Garicano}{Garicano}{2000}]{garicano2000hierarchies}
\textsc{Garicano, L.} (2000): \enquote{Hierarchies and the organization of
  knowledge in production,} \emph{Journal of Political Economy}, 108, 874--904.

\bibitem[\protect\citeauthoryear{Garicano and Rossi-Hansberg}{Garicano and
  Rossi-Hansberg}{2006}]{garicano2006organization}
\textsc{Garicano, L. and E.~Rossi-Hansberg} (2006): \enquote{Organization and
  inequality in a knowledge economy,} \emph{The Quarterly Journal of
  Economics}, 121, 1383--1435.

\bibitem[\protect\citeauthoryear{Guo, Cho, and Zhu}{Guo
  et~al.}{2004}]{guo2004partial}
\textsc{Guo, D., Y.~J. Cho, and J.~Zhu} (2004): \emph{Partial Ordering Methods
  in Nonlinear Problems}, Nova Publishers.

\bibitem[\protect\citeauthoryear{Guo and Lakshmikantham}{Guo and
  Lakshmikantham}{1988}]{guo1988nonlinear}
\textsc{Guo, D. and V.~Lakshmikantham} (1988): \emph{Nonlinear Problems in
  Abstract Cones}, Academic Press.

\bibitem[\protect\citeauthoryear{Hsu}{Hsu}{2012}]{hsu2012central}
\textsc{Hsu, W.-T.} (2012): \enquote{Central place theory and city size
  distribution,} \emph{The Economic Journal}, 122, 903--932.

\bibitem[\protect\citeauthoryear{Hsu, Holmes, and Morgan}{Hsu
  et~al.}{2014}]{hsu2014optimal}
\textsc{Hsu, W.-T., T.~J. Holmes, and F.~Morgan} (2014): \enquote{Optimal city
  hierarchy: a dynamic programming approach to central place theory,}
  \emph{Journal of Economic Theory}, 154, 245--273.

\bibitem[\protect\citeauthoryear{Kamihigashi, Reffett, and Yao}{Kamihigashi
  et~al.}{2015}]{kamihigashi2015application}
\textsc{Kamihigashi, T., K.~Reffett, and M.~Yao} (2015): \enquote{An
  application of Kleene's fixed point theorem to dynamic programming,}
  \emph{International Journal of Economic Theory}, 11, 429--434.

\bibitem[\protect\citeauthoryear{Kikuchi, Nishimura, and Stachurski}{Kikuchi
  et~al.}{2018}]{kikuchi2018span}
\textsc{Kikuchi, T., K.~Nishimura, and J.~Stachurski} (2018): \enquote{Span of
  control, transaction costs, and the structure of production chains,}
  \emph{Theoretical Economics}, 13, 729--760.

\bibitem[\protect\citeauthoryear{Krasnosel'skii}{Krasnosel'skii}{1964}]{krasnoselskii1964positive}
\textsc{Krasnosel'skii} (1964): \emph{Positive Solutions of Operator
  Equations}, Noordhoff.

\bibitem[\protect\citeauthoryear{Kremer}{Kremer}{1993}]{kremer1993ring}
\textsc{Kremer, M.} (1993): \enquote{The O-ring theory of economic
  development,} \emph{The Quarterly Journal of Economics}, 108, 551--575.

\bibitem[\protect\citeauthoryear{Levine}{Levine}{2012}]{levine2012production}
\textsc{Levine, D.~K.} (2012): \enquote{Production chains,} \emph{Review of
  Economic Dynamics}, 15, 271--282.

\bibitem[\protect\citeauthoryear{Loewenstein and Prelec}{Loewenstein and
  Prelec}{1991}]{loewenstein1991negative}
\textsc{Loewenstein, G. and D.~Prelec} (1991): \enquote{Negative time
  preference,} \emph{The American Economic Review}, 81, 347--352.

\bibitem[\protect\citeauthoryear{Loewenstein and Sicherman}{Loewenstein and
  Sicherman}{1991}]{loewenstein1991workers}
\textsc{Loewenstein, G. and N.~Sicherman} (1991): \enquote{Do workers prefer
  increasing wage profiles?} \emph{Journal of Labor Economics}, 9, 67--84.

\bibitem[\protect\citeauthoryear{Lucas}{Lucas}{1978}]{lucas1978size}
\textsc{Lucas, R.~E.} (1978): \enquote{On the size distribution of business
  firms,} \emph{The Bell Journal of Economics}, 508--523.

\bibitem[\protect\citeauthoryear{Marinacci and Montrucchio}{Marinacci and
  Montrucchio}{2019}]{marinacci2019unique}
\textsc{Marinacci, M. and L.~Montrucchio} (2019): \enquote{Unique tarski fixed
  points,} \emph{Mathematics of Operations Research}, 44, 1174--1191.

\bibitem[\protect\citeauthoryear{Martins-da Rocha and Vailakis}{Martins-da
  Rocha and Vailakis}{2010}]{martins2010existence}
\textsc{Martins-da Rocha, F.~V. and Y.~Vailakis} (2010): \enquote{Existence and
  uniqueness of a fixed point for local contractions,} \emph{Econometrica}, 78,
  1127--1141.

\bibitem[\protect\citeauthoryear{Milgrom and Shannon}{Milgrom and
  Shannon}{1994}]{milgrom1994monotone}
\textsc{Milgrom, P. and C.~Shannon} (1994): \enquote{Monotone comparative
  statics,} \emph{Econometrica: Journal of the Econometric Society}, 157--180.

\bibitem[\protect\citeauthoryear{Rinc{\'o}n-Zapatero and
  Rodr{\'\i}guez-Palmero}{Rinc{\'o}n-Zapatero and
  Rodr{\'\i}guez-Palmero}{2003}]{rincon2003existence}
\textsc{Rinc{\'o}n-Zapatero, J.~P. and C.~Rodr{\'\i}guez-Palmero} (2003):
  \enquote{Existence and uniqueness of solutions to the Bellman equation in the
  unbounded case,} \emph{Econometrica}, 71, 1519--1555.

\bibitem[\protect\citeauthoryear{Stokey and Lucas}{Stokey and
  Lucas}{1989}]{stokeyec}
\textsc{Stokey, N. and R.~E. Lucas} (1989): \emph{Recursive Methods in Economic
  Dynamics (with EC Prescott)}, Harvard University Press.

\bibitem[\protect\citeauthoryear{Thaler}{Thaler}{1981}]{thaler1981some}
\textsc{Thaler, R.} (1981): \enquote{Some empirical evidence on dynamic
  inconsistency,} \emph{Economics Letters}, 8, 201--207.

\bibitem[\protect\citeauthoryear{Tyazhelnikov}{Tyazhelnikov}{2019}]{tyazhelnikov2019production}
\textsc{Tyazhelnikov, V.} (2019): \enquote{Production Clustering and
  Offshoring,} Tech. rep., University of Sydney.

\bibitem[\protect\citeauthoryear{Yu and Zhang}{Yu and
  Zhang}{2019}]{yu2019equilibrium}
\textsc{Yu, M. and J.~Zhang} (2019): \enquote{Equilibrium in production chains
  with multiple upstream partners,} \emph{Journal of Mathematical Economics},
  83, 1--10.

\bibitem[\protect\citeauthoryear{Zhang}{Zhang}{2013}]{zhang2013variational}
\textsc{Zhang, Z.} (2013): \emph{Variational, Topological, and Partial Order
  Methods with Their Applications}, vol.~29 of \emph{Developments in
  Mathematics}, Springer Berlin Heidelberg.

\end{thebibliography}

\end{document}